\definecolor{darkred}{RGB}{150, 0, 0}
\definecolor{darkgreen}{RGB}{0, 100, 0}
\definecolor{darkblue}{RGB}{0, 0, 200}
\newtheorem{lemma}{Lemma} 
\newtheorem{theorem}{Theorem}
\newtheorem{corollary}{Corollary}
\titleformat{\section}{\small\bfseries\uppercase}{\thesection.}{0.7em}{\centering} 
\DeclareMathOperator{\tr}{tr} 
\begin{document}

\title{Optimality of any pair of incompatible rank-one projective measurements for some non-trivial Bell inequality}

\author{Gabriel Pereira Alves}
\email{gpereira@fuw.edu.pl}
\affiliation{Faculty of Physics, University of Warsaw, Pasteura 5, 02-093 Warsaw, Poland}

\author{J\k{e}drzej Kaniewski}
\email{jkaniewski@fuw.edu.pl}
\affiliation{Faculty of Physics, University of Warsaw, Pasteura 5, 02-093 Warsaw, Poland}

\date{\today}

\begin{abstract}
Bell non-locality represents one of the most striking departures of quantum mechanics from classical physics. It shows that correlations between space-like separated systems allowed by quantum mechanics are stronger than those present in any classical theory. In a recent work [Sci. Adv. 7, eabc3847 (2021)], a family of Bell functionals tailored to mutually unbiased bases (MUBs) is proposed. For these functionals, the maximal quantum violation is achieved if the two measurements performed by one of the parties are constructed out of MUBs of a fixed dimension. Here, we generalize this construction to an arbitrary incompatible pair of rank-one projective measurements. By constructing a new family of Bell functionals, we show that for any such pair there exists a Bell inequality that is \emph{maximally} violated by this pair. Moreover, when investigating the robustness of these violations to noise, we demonstrate that the realization which is most robust to noise is not generated by MUBs.
\end{abstract}

\maketitle

\section{Introduction}

\noindent The discovery of Bell non-locality lies among the most fundamental results of twentieth-century physics. While the theoretical description was proposed by Bell in 1964 \cite{Bell64}, the first experimental demonstration was performed by Freedman and Clauser \cite{Freedman72}, followed by the seminal work of Aspect et al.~\cite{Aspect81, Aspect82a, Aspect82b}. In 2015, three independent groups performed the Bell experiment in a loophole-free manner~\cite{Hensen15, Giustina15, Shalm15}, meaning that they managed to eliminate several issues that could falsify the conclusions of the experiment.

In a nutshell, Bell non-locality states that correlations between spatially separated parties allowed by quantum theory are stronger than those allowed by local-realistic theories \cite{Bell64, Brunner14}. It is easy to see that entanglement \cite{Horodecki09} and incompatibility of measurements \cite{Heinosaari2016} constitute two necessary resources to generate non-local correlations \cite{Fine82}. Conversely, conditions sufficient to generate non-locality are known only for a restricted class of states or measurements. For a bipartite Bell scenario, all pure entangled states can generate non-locality, a statement referred to as Gisin's theorem \cite{Gisin91}. A result of similar generality for incompatibility of measurements was obtained in Ref.~\cite{Wolf09}, in which it is proved that every incompatible pair of projective measurements enables the violation of a Bell inequality. Here, we continue the study of how useful a pair of measurements is for the purpose of generating non-locality. However, our focus is not merely on observing a Bell violation, but on producing the maximal violation allowed by quantum mechanics.

In this work, we focus on pairs of rank-one projective measurements and generalize a framework originally presented in Ref.~\cite{Tavakoli21} in which a family of bipartite Bell functionals is tailored to mutually unbiased bases (MUBs) \cite{Schwinger60, Durt10}. In other words, in Ref.~\citep{Tavakoli21} the maximum value obtainable by any quantum realization of the functionals -- here referred to as quantum value and denoted by $\beta_Q$ -- can be achieved if one of the parties implements a pair of rank-one projective measurements with uniform overlaps. Here, we study the same Bell scenario (shown in Fig.~\ref{scenario}) as in Ref.~\cite{Tavakoli21}, parametrized by an integer $d\ge 2$. However, our new functionals are tailored to a more general pair of rank-one projective measurements. More specifically, the measurement operators are constructed out of a pair of orthonormal bases $\{\ket{e_j}\}_{j=1}^{d}$ and $\{\ket{f_k}\}_{k=1}^{d}$ on $\mathbb{C}^{d}$ and let us denote the resulting overlap matrix by
\begin{align}
O_{jk} := |\braket{e_j|f_k}|. \label{overlap}
\end{align}
The only assumption we make is that these measurements are incompatible since this is a necessary condition for non-locality.

The family of Bell functionals introduced in this work is designed so that both the quantum value and a realization that achieves it can be written down explicitly. We also demonstrate that our functionals are non-trivial, meaning that the maximum value achievable by any classical realization -- referred to as local value and denoted by $\beta_L$ -- is strictly smaller than the quantum value. This result is obtained by deriving a non-trivial certification statement for the measurements performed by one of the parties. Finally, for a wide class of functionals, we obtain a device-independent certification of the maximally entangled state of local dimension $d$.

To make our work relevant for experiments, we investigate the robustness to noise of the proposed optimal realizations. In the standard model of noise, in which the state is replaced by an isotropic state, it can be shown that the bigger the gap between $\beta_Q$ and $\beta_L$, the more robust to noise the optimal realization is. For even $d$, we found that the largest gap is achieved if and only if the original measurements correspond to a direct sum of qubit MUBs, i.e.~all the non-zero elements of the matrix $O$ must be equal to $\sfrac{1}{\sqrt{2}}$. This contradicts our initial guess that the most noise-robust realization (in a fixed dimension) would correspond to the one generated by MUBs. For odd $d$, we could not determine the largest possible gap, but we have derived an upper bound on $\beta_Q - \beta_L$ which is valid for all $d \ge 3$. Furthermore, we show analytically that for every odd $d$ there exist measurements that give rise to a realization that is more robust to noise than the realization obtained from MUBs.

\begin{figure}[t!]
\hspace{7cm}
\includegraphics[scale = 1]{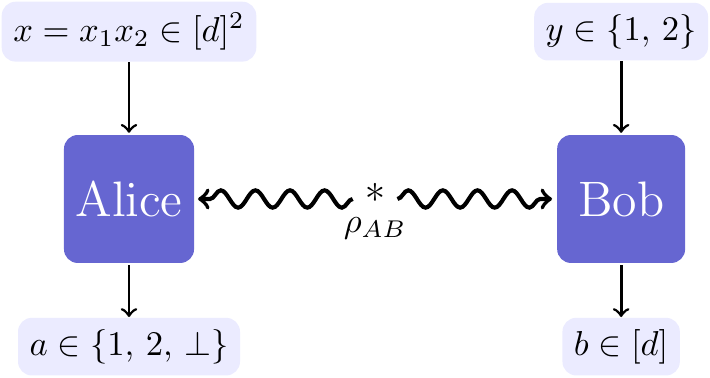}
\caption{A Bell scenario parametrized by $d\ge 2$. The inputs and outputs belong to the following sets: $b,\, x_1,\, x_2\in \{1,\, ...,\, d\}$, $y\in \{1,\, 2\}$ and $a\in \{1,\, 2,\, \perp\}$.} \label{scenario}
\end{figure}

\section{A family of Bell functionals}

\noindent Our goal is to show that every pair of incompatible rank-one projective measurements is optimal for some Bell functional. To achieve this, it is convenient to first apply a simple preprocessing, which simply discards the subspace in which the measurements are compatible (which is not useful for the purpose of generating non-locality). Then, we introduce a bipartite Bell scenario parametrized by an integer $d$ and construct a family of functionals whose quantum value can be computed analytically. Lastly, we present a realization that saturates the quantum value.

Consider a pair of orthonormal bases on $\mathbb{C}^{d^\prime}$, which we denote by $\{\ket{e_{j}}\}_{j=1}^{d^\prime}$ and $\{\ket{f_{k}}\}_{k=1}^{d^\prime}$, and let $O_{jk}^\prime := |\braket{e_{j}|f_{k}}|$ be the overlap matrix. If the two bases share a vector this results in a one-dimensional subspace of $\mathbb{C}^{d^\prime}$ in which the measurements are compatible. We remove all such subspaces by truncating the original Hilbert space appropriately, which leads to a pair of rank-one projective measurements acting on $\mathbb{C}^{d}$ for some $d \leq d^\prime$ (since we assume that the original measurements are incompatible, we are guaranteed that $d \geq 2$). For these new measurements, the overlap matrix is guaranteed to satisfy $O_{jk} < 1$. Moreover, these new measurements can be implemented by performing the original measurements on a quantum state with appropriately chosen local support. From now, we will assume that this process has already been performed and we will restrict our attention to measurements acting on effective dimension $d$ whose overlaps satisfy $O_{jk} < 1$.

Now, consider a bipartite Bell scenario characterized by an integer $d \geq 2$, whose parties are named Alice and Bob. For each $d$, Alice is given a two-character string denoted by $x:=x_1 x_2$, for $x_1,\, x_2\in \{1,\, ...,\, d\}$ and outputs $a\in\{1,\, 2,\, \perp\}$. Bob has two possible inputs labeled by $y\in\{1,\, 2\}$ and outputs $b\in\{1,\, ...,\, d\}$ (see Fig.~\ref{scenario}). We are given a pair of orthonormal bases $\{\ket{e_j}\}_{j=1}^{d}$ and $\{\ket{f_k}\}_{k=1}^{d}$ whose overlap matrix is given by $O_{jk} := |\braket{e_j|f_k}|$ and satisfies $O_{jk} < 1$. We define the \emph{correlation} score $\mathcal{C}_d$ as
\begin{multline}
\mathcal{C}_d : = \sum_{x_1,x_2=1}^d\, \sum_{y=1}^2 \lambda_x \left[\, p( a =y,\, b=x_y|x,\, y) \right.  \\
- \left. p( a = \bar{y},\, b=x_y|x,\, y) \right] , \label{score}
\end{multline}
where $\lambda_x := \sqrt{1-O_{x_1x_2}^2}$ and the notation $\bar{y}$ means that the value of $y$ is flipped from 1 to 2 or vice-versa. Note that the weights $\lambda_{x}$ depend on the overlap matrix and, as we will later see, this ensures that the functional is tailored to the orthonormal bases $\{\ket{e_j}\}_{j=1}^{d}$ and $\{\ket{f_k}\}_{k=1}^{d}$. The only terms that produce a non-zero contribution to $\mathcal{C}_d$ correspond to the cases when $b=x_y$ and $a\neq\perp$. In those cases, the score is increased (decreased) if Alice outputs $a=y$ ($a=\bar{y}$). In order to balance the cases where $a\in\{1,\, 2\}$ (Alice plays the game) and $a=\perp$ (Alice does not play the game), a penalty is introduced\footnote{In the end  of Appendix \ref{proof lemma 1} there is a clarification of why this penalty is necessary.}, giving rise to our \emph{final} score:
\begin{align}
\mathcal{F}_d : =\mathcal{C}_d-\frac{1}{2}\sum_{x_1,x_2=1}^d \lambda_x^2\, \left[\,p(a=1|x)+p(a=2|x)\right]. \label{functional}
\end{align}
It turns out that the quantum value of $\mathcal{F}_d$ does not depend on the overlap matrix $O$, as shown in the lemma below.

\begin{lemma}
For $d\ge 2$, the quantum value of $\mathcal{F}_d$ is $d-1$. \label{lemma1}
\end{lemma}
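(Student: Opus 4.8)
The plan is to prove the two inequalities $\beta_Q(\mathcal{F}_d) \le d-1$ and $\beta_Q(\mathcal{F}_d) \ge d-1$ separately. For the upper bound, I would write a general quantum realization: a shared state $\rho_{AB}$, Bob's projective measurements $\{N_b^y\}$ for $y\in\{1,2\}$, and Alice's POVM $\{M_a^x\}$ with $a\in\{1,2,\perp\}$. Plugging the Born-rule probabilities into \eqref{functional} and grouping terms by the string $x=x_1x_2$, each summand becomes a trace against a Hermitian operator built from $\lambda_x(N_{x_1}^1 - N_{x_2}^2)$ (coupling Alice's $a=1$ outcome to Bob's $y=1$ and $a=2$ to $y=2$, with the cross terms flipped) minus the penalty $\tfrac12 \lambda_x^2 (M_1^x + M_2^x)\otimes\mathds{1}$. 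The key algebraic step is to recognize the penalty term as completing a square: for fixed $x$ one bounds $\tr[M_1^x \otimes (\lambda_x N_{x_1}^1 - \lambda_x N_{x_2}^2)\,\rho] - \tfrac12\lambda_x^2 \tr[(M_1^x+M_2^x)\otimes \mathds{1}\,\rho]$ from above using $M_a^x \ge 0$, $\sum_a M_a^x = \mathds{1}$, and the operator inequality coming from $(\lambda_x N^1_{x_1} \pm \text{something})^2 \ge 0$. After optimizing over Alice's operators (the optimal choice will essentially project onto the positive part of Bob's operator difference), the bound should collapse to something like $\tfrac12\sum_x \lambda_x^2 \|\,N_{x_1}^1 - N_{x_2}^2\,\|$-type expressions, and then a counting argument over $x_1,x_2$ together with $\sum_k N_k^y = \mathds{1}$ and $\sum_x \lambda_x^2 = \sum_{jk}(1-O_{jk}^2) = d^2 - d$ gives exactly $d-1$.

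For the lower bound I would exhibit an explicit realization achieving $d-1$ and simply compute its value. The natural ansatz, following the structure of Ref.~\cite{Tavakoli21}: let Alice and Bob share the maximally entangled state $\ket{\phi_d^+} = \tfrac{1}{\sqrt d}\sum_{i}\ket{ii}$ in $\mathbb{C}^d\otimes\mathbb{C}^d$; let Bob measure in the basis $\{\ket{e_k^*}\}$ for $y=1$ and $\{\ket{f_k^*}\}$ for $y=2$ (complex conjugates, so that by the transpose trick Bob's measurement prepares the corresponding pure state on Alice's side); and let Alice, upon receiving $x=x_1x_2$, perform a qubit-like measurement on the two-dimensional subspace $\mathrm{span}\{\ket{e_{x_1}},\ket{f_{x_2}}\}$ whose two non-$\perp$ outcomes are tuned to the specific vectors $\ket{e_{x_1}}$ and $\ket{f_{x_2}}$, outputting $\perp$ on the orthogonal complement. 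Then $p(a=1,b=x_1|x,1) = \tfrac{1}{d}|\braket{\text{Alice proj}|e_{x_1}}|^2$ and similar, and the weights $\lambda_x = \sqrt{1-O_{x_1x_2}^2}$ are precisely what is needed so that the per-$x$ contribution equals a fixed quantity independent of the overlaps; summing over the $d^2$ strings and subtracting the penalty yields $d-1$.

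I expect the main obstacle to be the upper bound, specifically the operator-inequality / completing-the-square step that handles Alice's three-outcome measurement: one must argue that for each $x$ the optimal Alice strategy makes $M_1^x$ and $M_2^x$ supported on orthogonal parts and that the $\perp$ outcome absorbs exactly the right amount so the penalty cancels the "diagonal" contributions, leaving only the off-diagonal coupling between the two bases. Getting the constant right requires careful bookkeeping of which $(x_1,x_2)$ pairs the operators $N_k^y$ appear in, and using $\lambda_x^2 = 1-O_{x_1x_2}^2$ together with the fact that $\sum_{x_2} O_{x_1 x_2}^2 = 1$ for each fixed $x_1$ (rows of $O^2$ are stochastic) to perform the sum. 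A secondary subtlety is confirming that the supremum is actually attained (so "value" is well defined) rather than only approached, which the explicit realization in the lower-bound step settles. The dimension of Bob's and Alice's systems can be taken to be $d$ without loss of generality since the functional only ever couples $d$ of Bob's outcomes, but I would note that the upper-bound argument should not assume this a priori — it should hold for arbitrary-dimensional realizations, and the bound $d-1$ emerging independently of the realization's dimension is the crux of the lemma.
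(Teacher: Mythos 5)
Your lower-bound construction is essentially the paper's: the maximally entangled state, Bob measuring the two bases, and Alice's two non-$\perp$ effects being the rank-one eigenprojectors of $(P_{x_1}-Q_{x_2})^{\mathsf T}$ inside $\mathrm{span}\{\ket{e_{x_1}},\ket{f_{x_2}}\}$. Two small corrections there: those eigenvectors are \emph{not} $\ket{e_{x_1}}$ and $\ket{f_{x_2}}$ themselves but the $\pm\lambda_x$ eigenvectors of their projector difference, and the per-$x$ contribution to $\mathcal{F}_d$ is $(1-O_{x_1x_2}^2)/d$, which is \emph{not} overlap-independent; only the sum over $x$ collapses to $d-1$, via the normalization of the rows and columns of $O$.

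For the upper bound you take a genuinely different route from the paper. The paper bounds $\mathcal{C}_d$ by two applications of Cauchy--Schwarz (first Hilbert--Schmidt, term by term, then over the vector of $x$-terms), reduces everything to the single scalar $\gamma=\sum_x\lambda_x^2\tr[(A_x^{(1)}+A_x^{(2)})\otimes\mathds{1}\,\rho_{AB}]$, and maximizes $\sqrt{2(d-1)\gamma}-\gamma/2$ over $\gamma\in[0,d(d-1)]$. Your completing-the-square idea is sound and in fact yields a shorter, purely term-by-term argument --- but the square you need is across the tensor product, $\bigl(\lambda_x A_x\otimes\mathds{1}-\mathds{1}\otimes(N^1_{x_1}-N^2_{x_2})\bigr)^2\ge 0$, not an inequality on Bob's side alone, and no optimization over Alice's effects is required. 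Taking Alice projective (justified by Naimark dilation, as the paper does; for a genuine three-outcome POVM the identity $A_x^2= M^x_1+M^x_2$ you need is not automatic), this gives for each $x$ the bound $\tfrac12\tr[\mathds{1}\otimes(N^1_{x_1}-N^2_{x_2})^2\,\rho_{AB}]$, and summing with $\sum_x(N^1_{x_1}-N^2_{x_2})^2\le 2(d-1)\mathds{1}$ finishes. Your stated intermediate form $\tfrac12\sum_x\lambda_x^2\|N^1_{x_1}-N^2_{x_2}\|$ cannot be right --- on the honest realization it evaluates to $\tfrac12\sum_x\lambda_x^3\neq d-1$ --- so that bookkeeping step as written would fail and should be replaced by the operator-square expression above. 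What your route buys is brevity and locality in $x$; what the paper's route buys is that the saturation conditions of its Cauchy--Schwarz steps are precisely what later drives the certification statements of Theorem~\ref{certification theorem}.
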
\vspace{-3mm}

\noindent The proof of Lemma \ref{lemma1} consists of finding an upper bound on the value of $\mathcal{F}_d$ and demonstrating that it can be saturated. Proving the upper bound relies on many elementary steps, which we present in Appendix \ref{proof lemma 1}. For our purposes, it suffices to show how to construct a quantum realization that saturates the quantum value of $\mathcal{F}_d$.

We choose the measurements of Bob corresponding to inputs $y = 1$ and $y = 2$ to be $P_{x_1}=\ket{e_{x_1}}\!\!\bra{e_{x_1}}$ and $Q_{x_2}=\ket{f_{x_2}}\!\!\bra{f_{x_2}}$, respectively.\footnote{We use $x_{1}$ ($x_{2}$) as the subscript for $P$ ($Q$) because we only get a non-zero contribution to the functional when $b = x_{y}$.} For the measurements of Alice, it is more convenient to work with Hermitian observables rather than measurement operators. For every input $x$ of Alice, we express the three measurement operators $A_x^{(1)}$, $A_x^{(2)}$ and $A_x^{(\perp)}$ in a compact manner by defining the observable $A_{x}$ as
\begin{multline}
A_x: = \,(+1)\times A_x^{(1)}+(-1)\times A_x^{(2)}+0\times A_x^{(\perp)} \\ = A_x^{(1)}-A_x^{(2)}. \label{alicesobs}
\end{multline}
It is easy to see that if the measurement is projective, then the spectrum of $A_x$ belongs to the set $\{\pm 1,\, 0\}$. In addition, any Hermitian operator whose spectrum is contained in this set can be interpreted as an observable arising from a projective measurement, so the two representations are equivalent. Then, let us choose $A_x$ as:
\begin{align}
A_x= \frac{1}{\lambda_x}\left(P_{x_1}-Q_{x_2}\right)^\textsf{T}, \label{alicesmeas}
\end{align}
where $(\cdot )^\textsf{T}$ denotes the transposition in the computational basis. Because $P_{x_1}$ and $Q_{x_2}$ are rank-one, the difference $P_{x_1}-Q_{x_2}$ is a rank-two operator. With a simple calculation, it is possible to check that the spectrum of $P_{x_1}-Q_{x_2}$ is contained in $\{\pm \lambda_x,~0\}$. Thus, since $O_{x_1x_2}<1$ and $\lambda_x = \sqrt{1 - O_{x_1x_2}^2}$, we have $\lambda_x>0$, and the normalising factor ensures the spectrum of $A_x$ is contained in $\{\pm 1,\, 0\}$.\footnote{Recall that taking the transposition does not affect the spectrum.}

Finally, we pick the shared state as the maximally entangled state belonging to $\mathbb{C}^d\otimes\mathbb{C}^d$:
\begin{align}
\ket{\Phi_d^+}=\frac{1}{\sqrt{d}}\sum_{k=1}^d\ket{k,\, k}. \label{state}
\end{align}
Then, let us quickly verify that the resulting realization indeed achieves the quantum value of $\mathcal{F}_d$. Evaluating $\mathcal{C}_d$ given in Eq.~\eqref{score} leads to
\begin{align}
\mathcal{C}_d =\sum_x \lambda_x \braket{\Phi^+_d|\, A_x\otimes(P_{x_1}-Q_{x_2})\, |\Phi^+_d},
\end{align}
where the summation over $y$ has already been performed to give $ A_x\otimes(P_{x_1}-Q_{x_2}) $. Plugging in Eq.~\eqref{alicesmeas} and exploiting the fact that for any linear operator $L$ we have $L\otimes\mathds{1}\ket{\Phi^+_d}=\mathds{1}\otimes L^\textsf{T}\ket{\Phi^+_d}$, leads to
\begin{align}
\mathcal{C}_d = \sum_x \braket{\Phi^+_d|\, \mathds{1}\otimes(P_{x_1}-Q_{x_2})^2\, |\Phi^+_d}. \label{intermediate step}
\end{align}
Proceeding similarly for $\mathcal{F}_d$, we get
\begin{align}
\mathcal{F}_d=\mathcal{C}_d-\frac{1}{2}\sum_x \lambda_x^2\, \braket{\Phi^+_d|\, (A_x^{(1)}+A_x^{(2)})\otimes\mathds{1}\, |\Phi^+_d}.
\end{align}
Since the measurement operators of Alice are projective, the term in parentheses can be identified as $A_x^{(1)}+A_x^{(2)}=A_x^2$ and
\begin{align}
\mathcal{F}_d &=\mathcal{C}_d-\frac{1}{2}\sum_x \lambda_x^2\, \braket{\Phi^+_d|\, A_x^2\otimes\mathds{1}\, |\Phi^+_d} \nonumber\\
&=\frac{1}{2}\sum_x \braket{\Phi^+_d|\, \mathds{1}\otimes(P_{x_1}-Q_{x_2})^2\, |\Phi^+_d}.
\end{align}
Finally, since $P_{x_1}$ and $Q_{x_2}$ are projectors, we can evaluate the sum:
\begin{align}
\sum_x (P_{x_1}-Q_{x_2})^2 = 2(d-1)\mathds{1}, \label{projectivity}
\end{align}
which leads to $\mathcal{F}_d=d-1$ for the desired realization.

\section{Characterizing the optimal realization}

\noindent Having computed the quantum value of $\mathcal{F}_d$, it would be natural to determine its local value and show that our functionals are indeed non-trivial. However, since $\beta_L$ does not have a closed-form expression, let us postpone it to the next section. Fortunately, it turns out that proving $\beta_Q>\beta_L$ is possible by characterising the realizations that achieve $\beta_Q$. Therefore, to demonstrate that $\mathcal{F}_d$ is non-trivial, we continue our analysis with the device-independent certification of the measurements and the state. 

\begin{theorem}
For any $d\ge 2$ and overlap matrix satisfying $O_{x_1x_2}<1$, under the assumption that Bob's marginal state is full-rank, for any quantum realization which achieves the quantum value of $\mathcal{F}_d$, it holds that:
\begin{enumerate}
\item Bob's measurement operators denoted by $P_{x_1}$ and $Q_{x_2}$ associated with inputs $y=1$ and $2$, respectively, must satisfy
\begin{align}
\begin{aligned}
O_{x_1x_2}^2P_{x_1}&=P_{x_1}Q_{x_2}P_{x_1}\;\text{and}\\ O_{x_1x_2}^2Q_{x_2}&=Q_{x_2}P_{x_1}Q_{x_2}\quad\forall~x_1,~x_2.
\end{aligned} \label{equations}
\end{align}
\item If $O$ has a row (or column) whose entries are all non-zeros, then there exist local isometries $V_A : \mathcal{H}_A \rightarrow \mathbb{C}^d\otimes \mathcal{H}_A$ and $V_B : \mathcal{H}_B \rightarrow \mathbb{C}^d\otimes \mathcal{H}_B$ whose action on the unknown state $\rho_{AB}$ yields
\begin{align}
(V_A \otimes V_B)\rho_{AB}(V_A^\dagger \otimes V_B^\dagger) = \Phi_d^+ \otimes \rho_\mathrm{aux},
\end{align}
where $\Phi_d^+$ is the density matrix of $\ket{\Phi_d^+}$ and $\rho_\mathrm{aux}$ corresponds to the uncharacterized part of $\rho_{AB}$.
\end{enumerate} \label{certification theorem}
\end{theorem}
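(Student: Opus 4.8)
The plan is to read off the saturation conditions of the bound $\mathcal{F}_d\le d-1$ (proved in Appendix~\ref{proof lemma 1}) and convert them into the two claims. Purify a generic optimal realization to $\ket{\psi}\in\mathcal{H}_A\otimes\mathcal{H}_B$, with Alice's effects $A_x^{(1)},A_x^{(2)},A_x^{(\perp)}$ (observable $A_x:=A_x^{(1)}-A_x^{(2)}$, effect $E_x:=A_x^{(1)}+A_x^{(2)}$) and Bob's effects $P_{x_1}$, $Q_{x_2}$ for inputs $y=1,2$; set $B_x:=P_{x_1}-Q_{x_2}$. The bound rests on the identity $d-1-\mathcal{F}_d=\tfrac{d}{2}\braket{\psi|\mathds{1}\otimes(R_P+R_Q)|\psi}+\tfrac12\sum_x\braket{\psi|T_x|\psi}$, where $R_P:=\sum_{x_1}(P_{x_1}-P_{x_1}^2)\ge0$ (and $R_Q$ analogously) and $T_x:=A_x^{(1)}\otimes(\lambda_x\mathds{1}-B_x)^2+A_x^{(2)}\otimes(\lambda_x\mathds{1}+B_x)^2+A_x^{(\perp)}\otimes B_x^2\ge0$, each summand of $T_x$ being a tensor product of positive operators. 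Optimality forces both contributions to vanish. Since $\rho_B$ is full rank, $R_P=R_Q=0$, i.e.\ Bob's POVMs are genuine projective measurements with $\sum_{x_1}P_{x_1}=\sum_{x_2}Q_{x_2}=\mathds{1}$; and $\braket{\psi|T_x|\psi}=0$ gives the refined identities $(A_x^{(1)}\otimes(\lambda_x\mathds{1}-B_x))\ket{\psi}=0$, $(A_x^{(2)}\otimes(\lambda_x\mathds{1}+B_x))\ket{\psi}=0$, $(A_x^{(\perp)}\otimes B_x)\ket{\psi}=0$, which combine into the master relation $\lambda_x(A_x\otimes\mathds{1})\ket{\psi}=(\mathds{1}\otimes B_x)\ket{\psi}$, henceforth $(\star)$.

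For part~1, repeatedly multiplying the three refined identities by $\mathds{1}\otimes B_x$ yields $(A_x^{(1)}\otimes B_x^k)\ket{\psi}=\lambda_x^k(A_x^{(1)}\otimes\mathds{1})\ket{\psi}$, $(A_x^{(2)}\otimes B_x^k)\ket{\psi}=(-\lambda_x)^k(A_x^{(2)}\otimes\mathds{1})\ket{\psi}$ and $(A_x^{(\perp)}\otimes B_x^k)\ket{\psi}=0$ for all $k\ge1$; adding them at $k=2$ and $k=4$ and comparing gives $(\mathds{1}\otimes(B_x^4-\lambda_x^2B_x^2))\ket{\psi}=0$, and full-rankness of $\rho_B$ promotes this to the operator identity $B_x^4=\lambda_x^2B_x^2$, so $B_x$ has spectrum in $\{0,\pm\lambda_x\}$ and $B_x^3=\lambda_x^2B_x$. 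Expanding the cube in $P:=P_{x_1}$, $Q:=Q_{x_2}$ and multiplying by $P$ (resp.\ $Q$) on both sides produces a quadratic identity for $PQP$ on $\operatorname{ran}P$ (resp.\ $QPQ$ on $\operatorname{ran}Q$) with roots $1$ and $O_{x_1x_2}^2$. The root $1$ can occur only on $\operatorname{ran}P_{x_1}\cap\operatorname{ran}Q_{x_2}$; to exclude it I would fix $x_1'\ne x_1$ with $O_{x_1'x_2}>0$ (such an index exists since column $x_2$ of $O$ has unit Euclidean norm and $O_{x_1x_2}<1$), note that a common eigenvector $\ket{v}$ of $P_{x_1}$ and $Q_{x_2}$ satisfies $B_{(x_1',x_2)}^2\ket{v}=\ket{v}$, hence $\bra{v}B_{(x_1',x_2)}^2=\bra{v}$, and contract the $k=2$ relation $(\mathds{1}\otimes B_{(x_1',x_2)}^2)\ket{\psi}=\lambda_{(x_1',x_2)}^2(E_{(x_1',x_2)}\otimes\mathds{1})\ket{\psi}$ with $\bra{\psi}(\mathds{1}\otimes\ket{v}\!\bra{v})$ from the left: writing $\ket{\phi_v}\otimes\ket{v}:=(\mathds{1}\otimes\ket{v}\!\bra{v})\ket{\psi}\neq 0$, this gives $\braket{\phi_v|\phi_v}=\lambda_{(x_1',x_2)}^2\braket{\phi_v|E_{(x_1',x_2)}|\phi_v}\le\lambda_{(x_1',x_2)}^2\braket{\phi_v|\phi_v}$, contradicting $\lambda_{(x_1',x_2)}<1$. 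Hence $PQP=O_{x_1x_2}^2P$ and $QPQ=O_{x_1x_2}^2Q$, which is Eq.~\eqref{equations}.

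For part~2, assume without loss of generality that some row $x_1^\ast$ of $O$ is nowhere zero (a nowhere-zero column is treated by swapping $P\leftrightarrow Q$ and $y=1\leftrightarrow2$). First I would transfer Bob's measurements onto Alice: summing $(\star)$ over $x_2$ at fixed $x_1$ and using $\sum_{x_2}Q_{x_2}=\mathds{1}$ gives $(\mathds{1}\otimes P_{x_1})\ket{\psi}=(\widehat{P}_{x_1}\otimes\mathds{1})\ket{\psi}$ with $\widehat{P}_{x_1}:=\tfrac1d\big(\mathds{1}+\sum_{x_2}\lambda_{(x_1,x_2)}A_{(x_1,x_2)}\big)$, and symmetrically $(\mathds{1}\otimes Q_{x_2})\ket{\psi}=(\widehat{Q}_{x_2}\otimes\mathds{1})\ket{\psi}$ with $\widehat{Q}_{x_2}:=\tfrac1d\big(\mathds{1}-\sum_{x_1}\lambda_{(x_1,x_2)}A_{(x_1,x_2)}\big)$. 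Transporting Bob's projectivity, completeness and the part-1 relations across the tensor factor (restricting to $\operatorname{supp}\rho_A$) shows that $\{\widehat{P}_{x_1}\}$ and $\{\widehat{Q}_{x_2}\}$ are projective measurements with $\widehat{P}_{x_1}\widehat{Q}_{x_2}\widehat{P}_{x_1}=O_{x_1x_2}^2\widehat{P}_{x_1}$. Taking the trace of Eq.~\eqref{equations} gives $O_{x_1x_2}^2\operatorname{rank}(P_{x_1})=\tr(Q_{x_2}P_{x_1})=O_{x_1x_2}^2\operatorname{rank}(Q_{x_2})$, so the nowhere-zero row forces every $\operatorname{rank}(P_{x_1})$ and every $\operatorname{rank}(Q_{x_2})$ to equal a single value $s$, whence $\dim\mathcal{H}_B=ds$. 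I would then define local isometries $V_A,V_B$ that coherently append a $d$-dimensional register holding the outcome of $\{\widehat{P}_{x_1}\}$, respectively $\{P_{x_1}\}$, and rotate each rank-$s$ block onto a fixed copy of $\mathbb{C}^s$, and verify --- using $(\widehat{P}_{x_1}\otimes\mathds{1})\ket{\psi}=(\mathds{1}\otimes P_{x_1})\ket{\psi}$, the analogous relation for $Q$, and the nowhere-zero row to fix the weights --- that $(V_A\otimes V_B)\rho_{AB}(V_A^\dagger\otimes V_B^\dagger)=\Phi_d^+\otimes\rho_\mathrm{aux}$.

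The step I expect to be the crux is this last one: upgrading ``perfect correlation between the mirrored measurements'' to genuine maximal entanglement of local dimension exactly $d$. The correlation relations alone only fix the block structure; pinning all Schmidt coefficients to be equal and fixing the number of blocks to $d$ requires using the two non-commuting measurements $\{P_{x_1}\}$ and $\{Q_{x_2}\}$ jointly, together with the hypothesis that some row (or column) of $O$ has no zero entry --- without it the functional admits optimal realizations that are direct sums of lower-dimensional ones, for which no such state certification can hold --- and carrying this out while allowing higher-rank projectors and not assuming $\rho_A$ full rank is the delicate part.
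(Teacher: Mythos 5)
Your part 1 is correct and reaches Eqs.~\eqref{equations} by a genuinely different route. Where the paper applies the Cauchy--Schwarz inequality twice and then harvests the saturation conditions (a proportionality constant $\mu_x$ that must be shown to be uniform and equal to $1$, plus projectivity extracted from the saturation of Ineq.~\eqref{projec sat}), you exhibit an explicit sum-of-squares identity: expanding your $T_x$ gives $\lambda_x^2E_x\otimes\mathds{1}-2\lambda_xA_x\otimes B_x+\mathds{1}\otimes B_x^2$, and $\sum_xB_x^2=2(d-1)\mathds{1}-d(R_P+R_Q)$, so $d-1-\mathcal{F}_d=\tfrac{d}{2}\langle R_P+R_Q\rangle+\tfrac12\sum_x\langle T_x\rangle$ indeed holds. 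This buys you the upper bound, Bob's projectivity, and the master relation $(\star)$ in one stroke, with $\mu=1$ automatic and without assuming Alice's measurements projective. Your passage from $(\star)$ to $B_x^3=\lambda_x^2B_x$ and then to Eqs.~\eqref{equations} via the quadratic identity $(P_{x_1}Q_{x_2}P_{x_1})^2-(1+O_{x_1x_2}^2)P_{x_1}Q_{x_2}P_{x_1}+O_{x_1x_2}^2P_{x_1}=0$, excluding the eigenvalue $1$ through a second index $x_1'$ with $O_{x_1'x_2}>0$, is valid but more laborious than necessary: summing $O_{x_1x_2}^2(P_{x_1}-Q_{x_2})=P_{x_1}Q_{x_2}P_{x_1}-Q_{x_2}P_{x_1}Q_{x_2}$ over $x_1$, using $\sum_{x_1}O_{x_1x_2}^2=1$ and $\sum_{x_1}P_{x_1}=\mathds{1}$, and sandwiching with the mutually orthogonal $P_{x_1}$ makes the identity hold term by term with no case analysis.

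Part 2, however, is a plan rather than a proof, and the step you leave out is the one that carries the entire content of the claim. You correctly obtain the mirrored operators $\widehat P_{x_1},\widehat Q_{x_2}$, the equal ranks, and the perfect correlation $\sum_{i,j}\ket{i,i}\!\bra{j,j}\otimes(\mathds{1}\otimes P_i)\rho_{AB}(\mathds{1}\otimes P_j)$ produced by appending outcome registers. But ``rotate each rank-$s$ block onto a fixed copy of $\mathbb{C}^s$'' is not an operation you are entitled to: the rotation must be built from Bob's own operators, must act as a genuine local unitary, and must intertwine consistently with Alice's mirrored rotation so that the off-diagonal blocks $(\mathds{1}\otimes P_i)\rho_{AB}(\mathds{1}\otimes P_j)$ all collapse to a single $i,j$-independent operator. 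This is exactly what the paper's $U_k=\sum_i\frac{1}{O_{i,j}O_{i+k,j}}P_iQ_jP_{i+k}$ accomplishes (and it is the only place the nowhere-zero column enters): one must verify $U_kP_iU_k^\dagger=P_{i-k}$ and, critically, $(\tilde U_k\otimes\mathds{1})\rho_{AB}=(\mathds{1}\otimes U_k^\dagger)\rho_{AB}$, after which every block equals $(\mathds{1}\otimes P_dQ_jP_i/(O_{d,j}O_{i,j}))\rho_{AB}(\cdots)^\dagger$ and reduces to $(\mathds{1}\otimes P_d)\rho_{AB}(\mathds{1}\otimes P_d)$, the equal Schmidt weights $\sfrac1d$ then following from normalization rather than being imposed. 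You flag this as the delicate part yourself; without constructing these intertwiners and checking the two displayed identities, the conclusion $\rho_\text{out}=\Phi_d^+\otimes\rho_\text{aux}$ does not follow from perfect correlation alone.
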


\begin{proof}[Sketch of the proof] Here we present a sketch just for the first part (see Appendix \ref{proof theorem 1} for a complete proof). To prove an upper bound on the quantum value of $\mathcal{F}_d$ the Cauchy--Schwarz inequality is used several times. Saturating these inequalities allows us to deduce that for all $x = x_{1} x_{2}$ the action of Alice's and Bob's operators on $\rho_{AB}$ satisfies the following relation:
\begin{align}
\lambda_x A_x\otimes \mathds{1}\,\rho_{AB}=\mathds{1}\otimes (P_{x_1}-Q_{x_2})\,\rho_{AB}. \label{proportion}
\end{align}
Since our goal is to certify Bob's measurements, we can assume that Alice's measurements are projective, which implies that $A_x^3=A_x$. Combining this identity with Eq.~\eqref{proportion} yields a polynomial equation in terms of operators $P_{x_1}$ and $Q_{x_2}$ and the marginal state on Bob's side $\rho_{B}$. Assuming that $\rho_{B}$ is full-rank allows us to remove the state dependence. Finally, by examining Eq.~\eqref{projectivity}, we conclude that saturating the quantum value on a state whose $\rho_{B}$ is full-rank requires the measurements of Bob to be projective. This leads to the desired Eqs.~\eqref{equations}.
\end{proof}

Eqs.~\eqref{equations} are derived purely from the fact that we observe the quantum value of $\mathcal{F}_d$. To prove that the functional $\mathcal{F}_d$ is non-trivial, it suffices to argue that these relations cannot be satisfied by any deterministic strategy.

\begin{theorem}
For any $d\ge 2$ and overlap matrix satisfying $O_{x_1x_2}<1$, the quantum value is strictly bigger than the local value. \label{main result}
\end{theorem}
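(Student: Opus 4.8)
The plan is to argue by contradiction: suppose there is a deterministic (local-realistic) strategy attaining the value $d-1$. By the standard argument, the local value $\beta_L$ is attained at a vertex of the local polytope, i.e.\ by a deterministic strategy in which Alice's output $a$ is a fixed function of $x = x_1 x_2$ and Bob's output $b$ is a fixed function of $y$. A deterministic strategy for Bob assigns, for each $y\in\{1,2\}$, a fixed value $b=g(y)\in\{1,\dots,d\}$; equivalently it corresponds to choosing $P_{x_1} = \ket{e_{g(1)}}\!\!\bra{e_{g(1)}}$ replaced by the rank-one (indeed $0/1$) ``projectors'' of a classical strategy. The cleanest route is to observe that a deterministic strategy is a special case of a quantum realization (take the shared state and all observables to be classical/commuting), so Theorem~\ref{certification theorem} applies: any strategy achieving $\beta_Q = d-1$ must satisfy Eqs.~\eqref{equations}. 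It therefore suffices to show that Eqs.~\eqref{equations} cannot hold for a deterministic Bob.

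Concretely, in a deterministic strategy Bob's ``measurement operators'' $P_{x_1}$ (for $y=1$) and $Q_{x_2}$ (for $y=2$) are, on the relevant classical register, diagonal $0/1$ matrices; more importantly, for a deterministic strategy the operator $P_{x_1}$ does not actually depend on $x_1$ and $Q_{x_2}$ does not depend on $x_2$ — call them $P$ and $Q$ — and both are commuting projectors (they are just indicator functions of the fixed outputs $g(1)$, $g(2)$ on the hidden variable). Plugging commuting projectors $P$, $Q$ into Eq.~\eqref{equations} gives $O_{x_1x_2}^2 P = PQP = PQ$ and $O_{x_1x_2}^2 Q = QPQ = QP = PQ$, so in particular $O_{x_1x_2}^2 P = O_{x_1x_2}^2 Q$ for every $x_1,x_2$. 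Since the overlap matrix satisfies $O_{x_1x_2} < 1$ but is not identically zero (the measurements are incompatible after truncation, so no column of $O$ is all zeros — in fact one checks $\sum_{x_2} O_{x_1x_2}^2 = 1$ for each $x_1$), there is a pair with $O_{x_1x_2} > 0$, forcing $P = Q$ on the support, i.e.\ $PQ = P = Q$; but then $O_{x_1x_2}^2 P = P$ contradicts $O_{x_1x_2} < 1$ unless $P = 0$, which is impossible since $P$ is a genuine (rank $\ge 1$) outcome projector with $\tr(\rho_B P) > 0$ needed to even register the output. One must be slightly careful here: Theorem~\ref{certification theorem} assumed Bob's marginal state is full-rank, which may fail for a deterministic strategy, so instead of quoting the theorem verbatim I would redo its few lines directly on a deterministic strategy — there the marginal ``state'' is a probability distribution over the hidden variable, and Eq.~\eqref{proportion} together with commutativity of all of Bob's classical operators yields the same contradiction on the support of that distribution.

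The main obstacle is precisely this full-rank caveat: the honest statement to prove is that \emph{no} deterministic strategy, on \emph{any} (possibly rank-deficient) classical state, reaches $d-1$, so I cannot simply cite Theorem~\ref{certification theorem}. The fix is to carry out the Cauchy--Schwarz saturation analysis restricted to the support of Bob's classical distribution, where everything commutes and the relation $\lambda_x A_x \otimes \mathds{1}\,\rho = \mathds{1}\otimes(P_{x_1}-Q_{x_2})\,\rho$ from Eq.~\eqref{proportion} becomes a pointwise identity between $\{-1,0,1\}$-valued functions, and then show no consistent assignment exists. An alternative, perhaps cleaner, finish is a direct counting bound: evaluate $\mathcal{F}_d$ on an arbitrary deterministic strategy, use $\sum_{x_2} O_{x_1 x_2}^2 = 1$ (equivalently $\sum_{x_2}\lambda_{x_1 x_2}^2 = d-1$) to show the penalty term already costs $\tfrac12\sum_x \lambda_x^2 = \tfrac{d}{2}(d-1)$ worth of constraint, and bound $\mathcal{C}_d$ strictly below $d-1 + \tfrac12\sum_x\lambda_x^2 \cdot(\text{active fraction})$ by exploiting that a deterministic Bob outputs a single $b$ for each $y$, so for fixed $y$ only the $d$ strings $x$ with $x_y = g(y)$ contribute, and the incompatibility ($O_{x_1x_2} < 1$, not all entries $0$) prevents Alice from simultaneously being correctly correlated with both of Bob's fixed outputs across all those strings. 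I would present the contradiction-via-\eqref{equations} argument as the main line and relegate the careful support-restricted version of Theorem~\ref{certification theorem}'s derivation, or the counting bound, to the appendix.
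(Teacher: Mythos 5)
Your overall strategy is exactly the paper's: apply the measurement-certification relations \eqref{equations} to a deterministic strategy and derive a contradiction with $O_{x_1x_2}<1$. However, two points in your execution need fixing. First, the claim that ``for a deterministic strategy the operator $P_{x_1}$ does not actually depend on $x_1$'' is wrong: if Bob deterministically outputs $b=u$ on input $y=1$, then $P_{x_1}=\delta_{x_1 u}$ (the full space for $x_1=u$, zero otherwise), which manifestly depends on $x_1$; note also that an $x_1$-independent $P$ would violate $\sum_{x_1}P_{x_1}=\mathds{1}$ for $d\ge 2$. Your subsequent chain ($P=Q$ on the support, then $P=0$, then $P\neq 0$) is built on this false premise, and it is anyway more work than needed: with $P_{x_1}=\delta_{x_1u}$ and $Q_{x_2}=\delta_{x_2v}$, evaluating \eqref{equations} at $x_1=u$, $x_2=v$ immediately gives $O_{uv}^2=1$, contradicting $O_{uv}<1$. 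That one line is the paper's entire proof.

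Second, the ``main obstacle'' you identify -- that Theorem~\ref{certification theorem} assumes a full-rank marginal, which might fail for a classical strategy -- dissolves once you make the standard reduction precise: $\beta_L$ is attained at a vertex of the local polytope, i.e.\ a deterministic strategy, which can be represented as a quantum realization on \emph{one-dimensional} local Hilbert spaces. There Bob's marginal state is the scalar $1$, which is trivially full-rank, so Theorem~\ref{certification theorem} applies verbatim and no support-restricted rederivation of the Cauchy--Schwarz saturation analysis (nor the alternative counting bound you sketch) is needed. With these two repairs your argument collapses to the paper's proof.
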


\begin{proof} A deterministic strategy can always be written as a quantum strategy where the local Hilbert spaces are one-dimensional. If Bob outputs $b = u$ for $y=1$ and $b=v$ for $y=2$, the associated projectors correspond to $P_{x_1} = \delta_{x_1u}$ and $Q_{x_2} = \delta_{x_2v}$. If this strategy saturates the quantum value,~Eqs.~\eqref{equations} would imply that $O_{x_1x_2}^2\delta_{x_1u}=\delta_{x_1u}\delta_{x_2v}\delta_{x_1u}$, for all $x_{1}, x_{2}$. However, choosing $x_1=u$ and $x_2=v$ leads to $O_{uv} = 1$, which is a contradiction.
\end{proof}

At this point, we can see why the assumption $O_{x_1x_2}<1$ was necessary. It is straightforward to check that $\mathcal{F}_d$ can be defined for any overlap matrix, the quantum value is always equal to $d - 1$ and the conclusion given in Eqs.~\eqref{equations} holds. However, without the assumption that $O_{x_1x_2}<1$, the measurement certification statement is no longer sufficient to deduce that the functional is non-trivial. In fact, in Appendix \ref{proof of the lower bound theorem}, we show that whenever $O_{x_{1} x_{2}} = 1$ for some $x$, then there exists a deterministic strategy achieving $\beta_{Q}$, i.e.~$\beta_{L} = \beta_{Q}$.

As explained at the beginning, every pair of incompatible rank-one projective measurements can be preprocessed to give measurements whose overlap matrix satisfies $O_{x_{1} x_{2}} < 1$ and, hence, gives rise to a non-trivial functional. Therefore, we obtain the following corollary.

\begin{corollary}
Every pair of incompatible rank-one projective measurements on a finite-dimensional Hilbert space is capable of producing non-local correlations and is optimal for some non-trivial Bell inequality. \label{main corollary}
\end{corollary}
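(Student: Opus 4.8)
The plan is to read off the corollary from the machinery already in place, using the preprocessing introduced at the start of this section together with Lemma~\ref{lemma1} and Theorem~\ref{main result}. First I would take an arbitrary incompatible pair of rank-one projective measurements on a finite-dimensional space, i.e.\ a pair of orthonormal bases $\{\ket{e_j^\prime}\}_{j=1}^{d^\prime}$ and $\{\ket{f_k^\prime}\}_{k=1}^{d^\prime}$ on $\mathbb{C}^{d^\prime}$, and apply the truncation described above: discard the one-dimensional subspaces spanned by vectors common to both bases. This yields a pair of rank-one projective measurements on some $\mathbb{C}^{d}$ with $d \leq d^\prime$ and an overlap matrix satisfying $O_{jk} < 1$ (if the original bases already share no vector, then $d = d^\prime$ and no truncation is needed). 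Incompatibility of the original pair forces $d \geq 2$: were the common subspace of co-dimension at most one, the two measurements would coincide outside a subspace of dimension $\leq 1$ and hence be jointly measurable. The point I would stress here is that the truncated measurements are not new objects unrelated to the input — they are exactly what one obtains by performing the original measurements on a state whose local support lies in the orthogonal complement of the common subspace (with the outcomes on the discarded subspace relabelled away, since they occur with probability zero), so any realization built from the truncated pair is simultaneously a realization built from the original pair.

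Next I would feed the truncated bases into the construction of Eqs.~\eqref{score}--\eqref{functional}, obtaining the functional $\mathcal{F}_d$. By Lemma~\ref{lemma1}, its quantum value is $d-1$, and the explicit realization exhibited just after the lemma attains it: Bob measures $P_{x_1}=\ket{e_{x_1}}\!\bra{e_{x_1}}$ and $Q_{x_2}=\ket{f_{x_2}}\!\bra{f_{x_2}}$, Alice measures the observables of Eq.~\eqref{alicesmeas}, and the shared state is $\ket{\Phi_d^+}$. Hence the truncated pair — and therefore, by the previous paragraph, the original pair — is optimal for $\mathcal{F}_d$. Since $O_{jk}<1$, Theorem~\ref{main result} gives $\beta_L < \beta_Q = d-1$, so $\mathcal{F}_d$ is a genuine Bell inequality and not a triviality; and the realization above produces a value strictly above $\beta_L$, so the correlations $p(a,b\,|\,x,y)$ it generates are non-local. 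Together these two facts are precisely the two assertions of the corollary.

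I expect no single hard step here — the statement is really a synthesis of the earlier results, and most of the work has already been done in proving Lemma~\ref{lemma1} and Theorems~\ref{certification theorem}--\ref{main result}. The only place that needs genuine care is the reduction itself: checking that the truncation always returns an effective dimension $d \geq 2$ with strict overlaps $O_{jk} < 1$, that the surviving vectors of each basis really do form an orthonormal basis of the truncated space (so that the notion of an overlap matrix still makes sense), and that the resulting measurements — with their smaller outcome alphabet — are honestly a restriction/relabelling of the given ones, so that ``optimal for $\mathcal{F}_d$'' is a statement about the original measurements rather than only about their truncation.
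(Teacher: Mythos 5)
Your proposal is correct and follows essentially the same route as the paper: preprocess the incompatible pair to obtain measurements on $\mathbb{C}^d$ with $d\ge 2$ and $O_{jk}<1$, build $\mathcal{F}_d$ from the truncated bases, invoke Lemma~\ref{lemma1} with the explicit optimal realization, and conclude non-triviality from Theorem~\ref{main result}. The extra care you devote to the truncation step (that $d\ge 2$ follows from incompatibility, that the surviving vectors still form orthonormal bases, and that the truncated measurements are a restriction of the originals) is exactly the content the paper asserts more briefly at the start of its second section.
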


\section{Robustness against noise}

\noindent To investigate how robust to noise the realizations presented above are, consider a specific noise model in which the measurements are kept unchanged while the state is replaced by the isotropic state
\begin{align}
\rho_\nu=\nu\Phi^+_d+(1-\nu)\frac{\mathds{1}\otimes\mathds{1}}{d^2}, \label{isotropic}
\end{align}
where $\nu\in[0,\, 1]$ is the visibility. A simple calculation shows that performing the optimal measurements on the isotropic state gives the value of $(d-1)(2\nu-1)$. We can use this result to compare the noise robustness of different functionals $\mathcal{F}_d$. More specifically, we are interested in discovering which functional enables a violation for the smallest $\nu$. To address this question, we first need to compute the local value.

For a fixed overlap matrix $O$, calculating the local value reduces to analysing the deterministic strategies for $\mathcal{F}_d$. If we choose the strategy of Bob in which he outputs  $b=u$ if $y=1$ and $b=v$ if $y=2$, the value becomes
\begin{align}
\mathcal{F}_d=\sum_x \lambda_x\left(\delta_{x_1u}-\delta_{x_2v}\right)A_x-\frac{1}{2}\sum_x\lambda_x^2A_x^2, \label{local bound 1}
\end{align}
where $A_x\in\{\pm 1,\, 0\}$ describes the deterministic strategy of Alice. Let us now show that the optimal strategy of Alice can be explicitly determined. The sum can be split into three distinct cases: $R_{\pm}=\{x\in [d]^2\, |\, (\delta_{x_1u}-\delta_{x_2v})=\pm 1\}$ and $R_0=[d]^2 \,\backslash\, (R_+\cup R_-)$. For the $R_0$ terms, the optimal strategy of Alice is clearly not to play the game (this would lead to a negative contribution), so the optimal choice is $A_x=0$.

For the $R_{\pm}$ terms the optimal choice is $A_{x} = \delta_{x_{1} u} - \delta_{x_{2} v}$, which is better than $A_{x} = - ( \delta_{x_{1} u} - \delta_{x_{2} v} )$ (because $\lambda_{x} > 0$) and better than $A_{x} = 0$ (because $\lambda_x-\frac{1}{2}\lambda_x^2 > 0$). Plugging in the optimal strategy of Alice into Eq.~\eqref{local bound 1} leads to the following auxiliary function:
\begin{align}
s(u,\, v) : = \sum_{x\in R_{\pm}} \left(\lambda_x-\frac{1}{2}\lambda_x^2\right), \label{local bound 2}
\end{align}
where the dependence on $u$ and $v$ is hidden inside the definitions of $R_{\pm}$. Then, $\beta_L $ can be written as
\begin{align}
\beta_L(O) =\max_{u,\, v} \: \left[s(u, v)\right]. \label{local value}
\end{align}
The choice of coefficients of our functionals, made at the beginning, ensures that the quantum value does not depend on the overlap matrix. It is a convenient choice because all the dependence of $O$ is contained in the local value. It is easy to see that in the noise model presented above, the largest robustness corresponds to the lowest local value. Searching for highly-robust functionals leads to the following theorem.

\begin{theorem}
For $d\ge 2$ and any overlap matrix, $\beta_L(O)\ge d+\sqrt{2}-\sfrac{5}{2}$. \label{lower bound theorem}
\end{theorem}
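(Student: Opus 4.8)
The plan is to prove the bound by exhibiting a single good choice of outputs $(u,v)$; since $\beta_L(O) = \max_{u,v} s(u,v)$ with $s(u,v)$ as in Eq.~\eqref{local bound 2}, it suffices to lower-bound $s$ at that one pair. Write $P_{jk} := O_{jk}^2$; because $\{\ket{e_j}\}$ and $\{\ket{f_k}\}$ are orthonormal bases, $\sum_k P_{jk} = \sum_j P_{jk} = 1$, i.e.~$P$ is doubly stochastic. Setting $g(t) := t - \tfrac12 t^2$ (increasing and concave on $[0,1]$) and $\psi(p) := g(\sqrt{1-p}) = \sqrt{1-p} - \tfrac12(1-p)$, the function $\psi$ is concave and decreasing on $[0,1]$ with $\psi(0) = \tfrac12$ and $\psi(\tfrac12) = \tfrac{1}{\sqrt2} - \tfrac14$; moreover $s(u,v) = \sum_{k \neq v}\psi(P_{uk}) + \sum_{j \neq u}\psi(P_{jv})$, and stochasticity gives $\sum_{k \neq v}P_{uk} = \sum_{j \neq u}P_{jv} = 1 - P_{uv}$. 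I would take $(u,v)$ to be the position of a largest entry, $M := P_{uv} = \max_{j,k}P_{jk}$ (note $M > 0$ since rows sum to $1$); then every entry of row $u$ and of column $v$ other than $P_{uv}$ lies in $[0,c]$ with $c := \min(M,\,1-M) \in (0,\tfrac12]$ — the bound by $1-M$ being forced by the sum constraint and the bound by $M$ by maximality.

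The heart of the argument is a chord estimate from concavity: $\psi(x) \ge \psi(0) + \tfrac{\psi(c) - \psi(0)}{c}\,x$ for $x \in [0,c]$, so summing over the $d-1$ off-entries of row $u$ (and likewise column $v$) and inserting the sum constraint yields
\begin{align}
s(u,v) \ge (d-1) + \frac{2\bigl(\psi(c) - \tfrac12\bigr)(1-M)}{c}. \nonumber
\end{align}
It remains to show the right-hand side is at least $d + \sqrt2 - \sfrac{5}{2}$, which I would do by splitting on whether $M \ge \tfrac12$ or $M < \tfrac12$. It is worth stressing why a single well-chosen pair is needed: averaging $s(u,v)$ over all $(u,v)$, or bounding each term crudely by $\psi(0) = \tfrac12$, or using only that a maximum beats the average over rows and columns, all lose the additive constant and give merely $\beta_L \ge d-2$. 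The real content is that at a maximal entry of $P$ there are two competing linear constraints on the remaining entries of its row and column — each is $\le M$, and they sum to $1-M$ — and feeding the tighter of the two through concavity of $\psi$ is precisely strong enough; recognizing that the correct bound is $c = \min(M,1-M)$, not $M$ alone, is the only delicate point, and everything after it is routine.

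If $M \ge \tfrac12$ then $c = 1-M \le \tfrac12$, the extra term reduces to $2\bigl(\psi(1-M) - \tfrac12\bigr)$, and since $\psi$ is decreasing, $\psi(1-M) \ge \psi(\tfrac12)$, so $s(u,v) \ge (d-2) + 2\psi(\tfrac12) = d + \sqrt2 - \sfrac{5}{2}$; the degenerate subcase $M = 1$ gives $s(u,v) = d-1$ outright, which already exceeds the target. If $M < \tfrac12$ then $c = M$; putting $w := \sqrt{1-M} \in (\tfrac{1}{\sqrt2},1)$ and using the identities $\psi(M) - \tfrac12 = -\tfrac12(1-w)^2$, $1-M = w^2$ and $M = (1-w)(1+w)$, the bound becomes $s(u,v) \ge (d-1) - \frac{w^2(1-w)}{1+w}$. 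On $[\tfrac{1}{\sqrt2},1]$ the numerator $w^2(1-w)$ is decreasing (its derivative $w(2-3w)$ is negative once $w > \tfrac23$) while $1+w$ is increasing, so the ratio is maximized at $w = \tfrac{1}{\sqrt2}$, where it equals $\frac{3-2\sqrt2}{2}$; hence $s(u,v) \ge (d-1) - \frac{3-2\sqrt2}{2} = d + \sqrt2 - \sfrac{5}{2}$. In both cases $\beta_L(O) \ge s(u,v) \ge d + \sqrt2 - \sfrac{5}{2}$.

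As a consistency check: the estimate is attained when $M = \tfrac12$ and all nonzero entries of $O$ equal $\sfrac{1}{\sqrt{2}}$ — the direct sum of qubit MUBs — so the constant $d + \sqrt2 - \sfrac{5}{2}$ is sharp, matching the claim (stated above) that this realization maximizes $\beta_Q - \beta_L$ for even $d$.
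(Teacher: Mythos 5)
Your proof is correct, and it takes a genuinely different route from the paper's. The paper fixes the same strategy $s(1,1)$ at a largest entry, but then minimizes the Schur-concave sum $\sum\sqrt{1-t_{x_1,1}}$ exactly over the ``constrained probability simplex'' $\{t_i\ge 0,\ \sum_i t_i = 1-t_{11},\ t_i\le t_{11}\}$ by enumerating its extremal points (which involve floor functions), and then minimizes the resulting single-variable function $s(\tau)$ piecewise over $\tau\in(0,1]$, reducing to the integers $n=1,2$. You instead replace the exact vertex minimization by a secant-line bound: since $\psi$ is concave it lies above its chord on $[0,c]$ with $c=\min(M,1-M)$, and summing the chord over the row and column and inserting the stochasticity constraint gives a closed-form bound in $M$ alone, disposed of by a clean two-case analysis. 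Your bound is a priori weaker than the paper's (the chord relaxation corresponds to allowing a fractional number of entries equal to $c$, whereas the true vertex has $\lfloor(1-M)/c\rfloor$ of them), but it is tight at the optimizer $M=\sfrac{1}{2}$, where $(1-M)/c=1$ is an integer, so you still land exactly on $d+\sqrt 2-\sfrac{5}{2}$; identifying $c=\min(M,1-M)$ rather than $M$ alone is indeed the one non-routine step, and you handle the degenerate case $M=1$ (where $c=0$) correctly. What you lose relative to the paper is that the exact extremal-point description is reused later (Lemma \ref{optimal matrix}) to characterize \emph{which} overlap matrices saturate the bound; recovering that from your argument would require additionally tracking when the chord inequality is an equality (by strict concavity, only at $x\in\{0,c\}$), which does work but is not automatic. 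Both proofs use only bistochasticity of the entrywise square of $O$, consistent with the paper's remark in Appendix \ref{explanation}.
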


\begin{proof}[Sketch of the proof](See Appendix \ref{proof of the lower bound theorem} for details). Primarily, note that a lower bound for any strategy $s(u,~v)$ is also a lower bound for $\beta_L(O)$. Because we can always relabel the measurements outputs, we assume, for simplicity, that $O_{11}$ is the largest element of $O$, i.e. $O_{11}\ge O_{uv}$, $\forall~u,~v \in \{1,\, ...,\, d\}$. Then, we lower bound $\beta_L(O)$ using the first strategy:
\begin{align}
\beta_L(O)= \max_{u,\, v}\: [s(u,\, v)]\ge s(1,\, 1).
\end{align}
Next, as $s(1,\, 1)$ depends only on overlaps squared, we define a new variable $t_{x_1x_2}:=O_{x_1x_2}^2$. This makes some terms in $s(1,\, 1)$ strictly concave in $t_{x_1x_2}$, which allows us to obtain a lower bound that depends only on $t_{11}$. Minimising that function concludes the proof.
\end{proof}

Although Theorem \ref{lower bound theorem} provides a lower bound on $\beta_L(O)$, it does not say whether this bound is achievable. Fortunately, by demanding the saturation of the strictly concave terms of $s(1,\, 1)$, we extract this information.

\begin{lemma} \label{optimal matrix}
For even $d \ge 2$, the lower bound of Theorem \ref{lower bound theorem} is achievable and every $O$ that saturates it corresponds to a direct sum of qubit MUBs, i.e., up to permutations, it can be written into as 
\begin{align}
O=\bigoplus_{i=1}^{\sfrac{d}{2}}\frac{1}{\sqrt{2}}J_2, \label{even matrix}
\end{align}
where $J_2$ is the $2\times 2$ matrix of ones. For odd $d \ge 3$, the lower bound of Theorem \ref{lower bound theorem} cannot be achieved by any overlap matrix.
\end{lemma}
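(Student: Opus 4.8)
The plan is to read off every equality condition hidden in the chain of inequalities that establishes Theorem~\ref{lower bound theorem}. After relabelling Bob's outputs so that $(1,1)$ attains $\max_{u,v}O_{uv}$ (which does not change $\beta_L$), that theorem comes from
\[
d+\sqrt{2}-\tfrac{5}{2}=\beta_L(O)\ge s(1,1)\ge h(t_{11})\ge\min_{t\in[\sfrac{1}{d},\,1]}h(t)=d+\sqrt{2}-\tfrac{5}{2},
\]
where $t_{jk}:=O_{jk}^2$, the function $g(t):=\sqrt{1-t}-\tfrac{1}{2}(1-t)$ is strictly concave, the second inequality is the chord bound of $g$ between $0$ and $t_{11}$ applied to each of the $2(d-1)$ terms of $s(1,1)$ (using $0\le t_{jk}\le t_{11}$ and $\sum_{x_2\neq1}t_{1x_2}=\sum_{x_1\neq1}t_{x_11}=1-t_{11}$), and $h$ is the resulting one-variable bound. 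If $O$ saturates the bound of Theorem~\ref{lower bound theorem}, all of these are equalities. First I would verify by a short calculus argument that $t=\sfrac{1}{2}$ is the \emph{unique} minimiser of $h$ on $[\sfrac{1}{d},1]$; then tightness of the last inequality gives $t_{11}=\sfrac{1}{2}$, i.e.\ the largest overlap of $O$ equals $\sfrac{1}{\sqrt{2}}$. Tightness of the chord bound (this is where strict concavity of $g$ enters) forces every entry of row~$1$ and of column~$1$ of $(t_{jk})$ into $\{0,\sfrac{1}{2}\}$; as these rows and columns sum to $1$, row~$1$ and column~$1$ of $O$ each have exactly two non-zero entries, all equal to $\sfrac{1}{\sqrt{2}}$.

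Since $\beta_L$ is invariant under relabelling of outputs, the statement just obtained holds for \emph{every} maximal entry: whenever $O_{jk}=\sfrac{1}{\sqrt{2}}$, both row~$j$ and column~$k$ of $O$ have exactly two non-zero entries, both $\sfrac{1}{\sqrt{2}}$. Propagating this along the bipartite support graph of $O$ (row-vertices versus column-vertices, an edge for each non-zero entry) starting from $(1,1)$, the connected component of $(1,1)$ is $2$-regular and all of its entries equal $\sfrac{1}{\sqrt{2}}$, hence it is a single cycle. Here I would invoke that $O$ is a genuine overlap matrix, $O_{jk}=|U_{jk}|$ for a unitary $U$: if a support component were a $2\ell$-cycle with $\ell\ge3$, two of its rows $j,j'$ would share exactly one column $c$, so the inner product of rows $j$ and $j'$ of $U$ would equal $U_{jc}\overline{U_{j'c}}$, of modulus $\tfrac{1}{2}\ne0$, contradicting orthonormality of $U$'s rows; and a component consisting of a single edge would force a row of $O$ with one entry equal to $1$, which is excluded. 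Therefore every support component that meets a maximal entry is a $4$-cycle, i.e.\ a $\sfrac{1}{\sqrt{2}}J_2$ block.

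After permuting rows and columns this yields $O=\bigoplus_{i=1}^{m}\sfrac{1}{\sqrt{2}}J_2\oplus O'$, where $O'$ is a $d'\times d'$ matrix, $d'=d-2m$, all of whose entries lie strictly below $\sfrac{1}{\sqrt{2}}$; since $(t_{jk})$, and hence $O$ and $U$, are block diagonal in these coordinates, the corresponding diagonal block of $U$ is unitary, so $O'$ is itself a valid overlap matrix. A one-line evaluation of Eq.~\eqref{local bound 2} shows that for $(u,v)$ inside the $O'$-block, $s(u,v)=2m+s'(u,v)$, where $s'$ is the auxiliary function associated with $O'$; hence $\beta_L(O)\ge2m+\beta_L(O')$. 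This forces $d'=0$: if $d'=1$ then $O'=(1)$, whose entry is not below $\sfrac{1}{\sqrt{2}}$, a contradiction; and if $d'\ge2$ then Theorem~\ref{lower bound theorem} applied to $O'$ reads $\beta_L(O')\ge d'+\sqrt{2}-\sfrac{5}{2}$, which is \emph{strict} because $O'$ has no entry equal to $\sfrac{1}{\sqrt{2}}$ while, by the uniqueness argument of the first paragraph, equality in Theorem~\ref{lower bound theorem} would require one, so $\beta_L(O)\ge2m+\beta_L(O')>d+\sqrt{2}-\sfrac{5}{2}$, again a contradiction. Hence $d=2m$, so up to permutations $O=\bigoplus_{i=1}^{\sfrac{d}{2}}\sfrac{1}{\sqrt{2}}J_2$; in particular $d$ is even, and conversely a short evaluation of Eq.~\eqref{local bound 2} confirms that this $O$ does attain $\beta_L=d+\sqrt{2}-\sfrac{5}{2}$, establishing achievability. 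For odd $d$ no such $O$ exists, so the bound is never attained.

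I expect the heart of the difficulty to be the passage from ``row~$1$ and column~$1$ of $O$ are clean'' to the global block structure. It rests on two ingredients. The first is the orthogonality argument that excludes support cycles of length $\ge6$: this is exactly where it is essential that $O$ is unistochastic rather than merely doubly stochastic (a $3\times3$ doubly stochastic matrix supported on a $6$-cycle in its bipartite graph need not be unistochastic, and without this the parity obstruction for odd $d$ would disappear). The second is ruling out a residual block $O'$ of sub-maximal entries, which needs the \emph{strict} form of Theorem~\ref{lower bound theorem} for overlap matrices whose largest entry lies below $\sfrac{1}{\sqrt{2}}$ --- that is, precisely the uniqueness of the minimiser of $h$. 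With both in hand, the even/odd dichotomy follows immediately from counting rows.
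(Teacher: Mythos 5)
Your first step has a genuine gap. You derive the one-variable bound $h(t_{11})$ by applying the chord bound of the strictly concave part $\sqrt{1-t}$ between $0$ and $\tau:=t_{11}$, which gives $\sum_{x_1\neq 1}\sqrt{1-t_{x_1,1}}\ge (d-1)+\frac{1-\tau}{\tau}\bigl(\sqrt{1-\tau}-1\bigr)$ and hence, via Eq.~\eqref{minimization}, $h(\tau)=d+2\,\frac{1-\tau}{\tau}\bigl(\sqrt{1-\tau}-1\bigr)-\tau$. But the minimum of this $h$ over $[\sfrac{1}{d},1]$ is \emph{not} $d+\sqrt{2}-\sfrac{5}{2}$ and is \emph{not} attained at $\tau=\sfrac{1}{2}$: for instance $h(0.6)\approx d-1.0901$, which lies strictly below $d+\sqrt{2}-\sfrac{5}{2}\approx d-1.0858$ (the minimum sits near $\tau\approx 0.61$). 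The reason is that the chord bound corresponds to placing $\frac{1-\tau}{\tau}$ entries at $\tau$ and the rest at $0$, which is not a feasible point of the constrained simplex unless $\frac{1-\tau}{\tau}$ is an integer; your relaxation discards exactly the integrality that makes $\tau=\sfrac{1}{2}$ optimal. Consequently the chain $\beta_L(O)\ge s(1,1)\ge h(t_{11})\ge\min h$ cannot close into a chain of equalities, and you cannot conclude $t_{11}=\sfrac{1}{2}$. The fix is the paper's route (Appendix~\ref{preliminaries}): minimise the Schur-concave sum over the \emph{extremal points} of the constrained simplex, which are permutations of $\bigl(\tau,\dots,\tau,\,1-\lfloor\frac{1-\tau}{\tau}\rfloor\tau,\,0,\dots,0\bigr)$ with an integer number of $\tau$'s; the resulting floor-function expression, Eq.~\eqref{the function}, is piecewise concave and its minimum over $(0,1]$ is uniquely $\sqrt{2}-\sfrac{5}{2}$ at $\tau=\sfrac{1}{2}$. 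With that replacement, your equality analysis (all entries of row $1$ and column $1$ of $(t_{jk})$ in $\{0,\sfrac{1}{2}\}$, hence exactly two entries $\sfrac{1}{\sqrt{2}}$ in each) goes through.

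The remainder of your argument is sound and takes a genuinely different route from the paper. You propagate the clean row/column property to every maximal entry, analyse the bipartite support graph, use orthonormality of the rows of $U$ to exclude support cycles of length $\ge 6$, and dispose of a residual sub-maximal block via additivity of $s$ across direct summands plus strictness of the lower bound. The paper instead peels off a single $\sfrac{1}{\sqrt{2}}J_2$ block (Lemma~\ref{smart corollary}, using unitarity to force $O_{22}=\sfrac{1}{\sqrt{2}}$), shows $\beta_L(O_{\mathrm{rest}})=(d-2)+\sqrt{2}-\sfrac{5}{2}$ exactly by the same $+2$ additivity, and iterates, with the odd case terminating in a forbidden $1\times 1$ block equal to $1$. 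Both arguments rest on the same two ingredients --- unistochasticity and additivity over blocks --- but yours is global where the paper's is inductive, and your cycle-exclusion step makes it especially transparent where unistochasticity, as opposed to mere bistochasticity, enters.
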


The complete proof of Lemma \ref{optimal matrix} can be found in Appendix \ref{proof lemma 2}. Let us briefly comment on the difference between even and odd $d$. Our proof of Theorem \ref{lower bound theorem} is based solely on the fact that the rows and columns of $O$ are normalized. However, from Eq.~\eqref{overlap}, we know that $O$ should be obtained by taking the entry-wise absolute value of a unitary matrix, which, in general, is more restrictive than just imposing the normalisation condition. For even $d$, there exists a valid overlap matrix which saturates the lower bound obtained in Theorem \ref{lower bound theorem}, but for odd $d$ this is not the case. For a more detailed explanation, see Appendix \ref{explanation}. 

\section{Discussion}

\noindent In this work, we have demonstrated how to tailor a Bell functional to a specific pair of incompatible rank-one projective measurements. More specifically, we have shown that for every such pair there exists a non-trivial Bell functional for which this pair is optimal. We have also proved that for these functionals a certain degree of certification is possible. If the quantum value of $\mathcal{F}_d$ is achieved, the measurements of Bob must satisfy some simple polynomial relations and, under some additional mild conditions, a maximally entangled state can be extracted.

We have also investigated how robust to noise these functionals are, as quantified by the difference $\beta_{Q} - \beta_{L}$. For even $d$, the largest robustness to noise arises when the measurements are chosen as direct sums of qubit MUBs. For odd $d$, we have a conjecture for $d = 3$, but no analytical proof of optimality. It is worth pointing out, however, that for $d \ge 3$ the optimal noise robustness is not achieved by $d$-dimensional MUBs.

There are several open questions arising from this work. An aspect that should be clarified are the precise conditions under which state certification is possible. We have shown that if the overlap matrix contains a non-null row or column, then a maximally entangled state can be extracted. On the other extreme, there are overlap matrices that decompose as a direct sum for which, as proved in Appendix \ref{exceptional cases}, state certification is not possible. Hence, the question about state certification is open for intermediate cases. Another question concerns the optimal noise robustness for odd dimensions. Our candidate for $d = 3$ exhibits a mathematically elegant structure and it would be interesting to see if the same happens for larger dimensions.

Lastly, it is known that all pure entangled states can generate non-locality, and we have demonstrated that incompatible pairs of rank-one projective measurements also do so. By assembling these results, is it possible to achieve necessary and sufficient conditions for, at least, some limited class of states \emph{and} measurements?

\section{Acknowledgements}

We acknowledge fruitful discussions with M{\'a}t{\'e} Farkas and Nicolas Gigena. The project ``Robust certification of quantum devices'' is carried out within the HOMING programme of the Foundation for Polish Science co-financed by the European Union under the European Regional Development Fund.
\bibliographystyle{apsrev4-1}
\bibliography{references.bib}

\onecolumngrid

\appendix

\titleformat{\section}{\small\bfseries}{Appendix \thesection:}{0.5em}{\centering}
\titleformat{\subsection}{\small\bfseries}{\thesection.\,\thesubsection}{1em}{\centering} 

\section{The quantum value of \texorpdfstring{$\mathcal{F}_d$}{Lg}} 

\noindent In this appendix, we provide a complete proof of Lemma \ref{lemma1}. In the main text, we specified a quantum realization for the functional $\mathcal{F}_d$ which achieves the value $d-1$. Here, in the first part of this appendix, we provide a matching upper bound, which shows that the quantum value of $\mathcal{F}_d$ equals $d-1$. In the second part, we present a natural extension of our functional to a set of $N>2$ rank-one projective measurements.

\subsection{Proof of Lemma \ref{lemma1}} \label{proof lemma 1}

\noindent At this point, we are only interested in deriving an upper bound, hence, we could, without loss of generality, restrict ourselves to realizations where the state is pure and measurements are projective. However, we will later reuse this argument to derive some certification statements, so we do not want to introduce unnecessary assumptions. As we are not interested in characterizing the measurements of Alice, we will assume that they are projective, but we make no assumptions about the measurements of Bob and the shared state. Thus, let us represent the state by $\rho_{AB}$, an arbitrary density matrix, and reuse the symbols $A_x$, $P_{x_1}$ and $Q_{x_2}$ to denote the observables of Alice and the two measurements of Bob, respectively.

We start again by using the Born rule to rewrite $\mathcal{C}_d$ as
\begin{align}
\mathcal{C}_d=\sum_x \lambda_x \,\tr \left[\left(A_x\otimes(P_{x_1}-Q_{x_2})\right)\rho_{AB}\right]. \label{C score}
\end{align}
Taking the absolute value of every term in the summation leads to
\begin{align}
\mathcal{C}_d \le \sum_x \left|\lambda_x \,\tr \left[\left(A_x\otimes(P_{x_1}-Q_{x_2})\right)\rho_{AB}\right]\right|. \label{summation}
\end{align}
In the next step, we apply the Cauchy--Schwarz inequality. For the Hilbert--Schmidt inner product, the Cauchy--Schwarz inequality reads $\left|\tr(X^\dagger Y)\right|\le \sqrt{\tr(X^\dagger X)}\,\sqrt{\tr(Y^\dagger Y)}$, where $X$ and $Y$ are arbitrary operators acting in a given Hilbert space. Therefore, taking $X=\left(A_x\otimes\mathds{1}\right)\rho_{AB}^{\sfrac{1}{2}}$ and $Y=\left(\mathds{1}\otimes\left(P_{x_1}-Q_{x_2}\right)\right)\rho_{AB}^{\sfrac{1}{2}}$, where $\rho_{AB}^{\sfrac{1}{2}}$ is the positive semi-definite square root of $\rho_{AB}$, we can apply the Cauchy--Schwarz inequality to all of the terms of $\mathcal{C}_d$, leading to
\begin{align}
\mathcal{C}_d &\le \sum_x \lambda_x \sqrt{\tr\left[\left(A_x^2\otimes\mathds{1}\right)\rho_{AB}\right]}\sqrt{\tr\left[\left(\mathds{1}\otimes(P_{x_1}-Q_{x_2})^2\right)\rho_{AB}\right]}. \label{score sat}
\end{align}
The second application of Cauchy--Schwarz, now for the standard inner product of real vectors, leads to
\begin{align}
\mathcal{C}_d\le \sqrt{\sum_x \lambda_x^2\,\tr\left[\left(A_x^2\otimes\mathds{1}\right)\rho_{AB}\right]}\sqrt{\sum_x\tr\left[\left(\mathds{1}\otimes(P_{x_1}-Q_{x_2})^2\right)\rho_{AB}\right]}. \label{score sat 2}
\end{align}
Let us now compute a universal upper bound on the second factor. For any measurement operators $P_{x_1}$ and $Q_{x_2}$, it is true that $P_{x_1}^2\le P_{x_1}$ and $Q_{x_2}^2\le Q_{x_2}$ (recall that for Hermitian operators $X, Y$ the inequality $X \geq Y$ is equivalent to $X - Y \geq 0$). Thus,
\begin{align}
\sum_{x_1,x_2}(P_{x_1}-Q_{x_2})^2 &= \sum_{x_1,x_2}(P_{x_1}^2+Q_{x_2}^2-\{P_{x_1},\, Q_{x_2}\}) \nonumber \\
&\le \sum_{x_1,x_2}(P_{x_1}+Q_{x_2}-\{P_{x_1},\, Q_{x_2}\}) = 2(d-1)\mathds{1}. \label{projec sat}
\end{align}
Therefore, the second factor is upper bounded by $\sqrt{2(d-1)}$. Using the assumption that the measurement operators of Alice are projective, we can rewrite the observable as $A_x^2= A_x^{(1)}+A_x^{(2)}$. Therefore,
\begin{align}
\mathcal{C}_d\le \sqrt{2(d-1)\sum_x \lambda_x^2\,\tr\left[\left(\left(A_x^{(1)}+A_x^{(2)}\right)\otimes\mathds{1}\right)\rho_{AB}\right]}. 
\end{align}
Finally, if we define
\begin{align}
\gamma:=\sum_x \lambda_x^2\,\tr\left[\left((A_x^{(1)}+A_x^{(2)})\otimes\mathds{1}\right)\rho_{AB}\right], \label{gamma}
\end{align}
it is easy to see that $\mathcal{F}_d$ is upper bounded by
\begin{align}
\mathcal{F}_d\le \sqrt{2(d-1)\gamma}-\frac{1}{2}\, \gamma. \label{quantum bound}
\end{align}
The last step is to maximize the right-hand side of Eq.~\eqref{quantum bound} over $\gamma = [0, d(d-1)]$  (it is easy to verify that the maximal value of $\gamma$ equals $\sum_{x} \lambda_{x}^{2} = d(d-1)$). Using elementary calculus we conclude that the right-hand side is maximized only when $\gamma=2(d-1)$, which leads to the final bound:
\begin{align}
    \mathcal{F}_d\le d-1. \label{quantum value}
\end{align}

At this point, the reader can better appreciate why we needed to introduce a penalty in Eq.~\eqref{functional}. In Eq.~\eqref{quantum bound}, $\gamma$ is maximized over the interval $[0, d(d - 1)]$. Then, the maximal $\gamma$ can be found at $2(d - 1)$, giving us the quantum value above. Now, note that if our functional had no penalty, it would be written as $\mathcal{F}_d = \mathcal{C}_d$ and the maximization of $\mathcal{C}_d$ in terms of $\gamma$ would produce
\begin{align}
	\mathcal{C}_d \le \sqrt{2(d - 1)\gamma}. \label{functional without penalty}
\end{align} 
In this case, $\gamma$ is maximal at $d(d-1)$, in the edge of the interval, so $\mathcal{C}_d$ is upper bounded by $\sqrt{2d}(d - 1)$.  On the other hand, by comparing Eqs.~\eqref{intermediate step} and \eqref{projectivity}, we obtain that the realization produced by the maximally entangled state is simply $\mathcal{C}_d = 2(d - 1)$. In a nutshell, introducing the penalty over the outputs $a = 1$ and $2$, force the realization by the maximally entangled state the same as the upper bound for $\mathcal{F}_d$ so that this state turns out to be optimal. 

In addition, note that, without penalty, Alice never outputs $a = \perp$. Then, the inclusion of a third output is justified. It allows Alice to play the game and be penalized or not to play the game and keep the value of the realization unchanged. Adding that a third output also guarantees that the spectrum of $A_x$ is proportional to the spectrum of $P_{x_1} - Q_{x_2}$, since $\text{spec}\, (P_{x_1} - Q_{x_2}) \in \{\pm \lambda_x,~0\}$.

\subsection{An extension to \texorpdfstring{$N > 2$}{Lg} rank-one projective measurements}

\noindent The original functionals tailored to MUBs in Ref.~\cite{Tavakoli21} have recently been extended to $N > 2$ bases \cite{Colomer22} and it turns out that an analogous extension can be constructed for non-MUB measurements. In this subsection, we present this generalization.

Suppose a bipartite Bell scenario in which Alice is given a string $x = (j,\, k)\, x_1 x_2 \, ... \, x_N$, in which $x_i \in [d]$, for $i = 1,~...,~N$, and $(j,\, k)$ is an ordered pair such that $j \in [N] - \{N\}$, $k \in [N] - \{1\}$ and $j < k$. Similarly to the previous case, Alice now outputs $a \in \{ 1,\, 2,\, ...,\, N,\, \perp \}$. On the other hand, the outputs of Bob are kept unchanged, while the possible inputs are $y \in [N]$.
\begin{figure}[ht!]
\includegraphics[scale = 1]{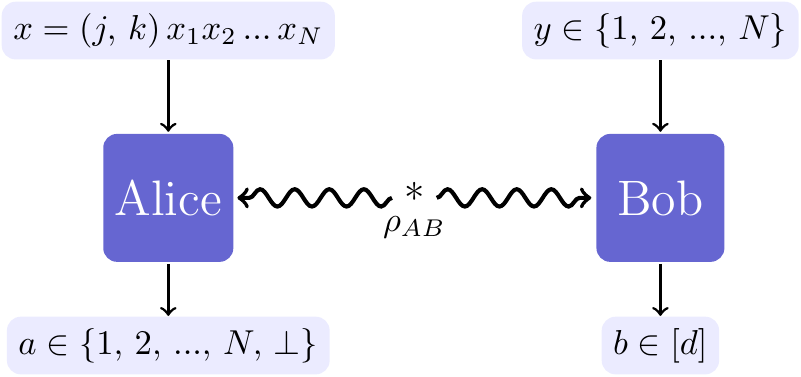}
\caption{A Bell scenario extended to a set of $N > 2$ projective measurements. Now, Alice is given a string with $N$ digits $x_1,~ x_2,~...,~ x_N$ ranging from $1$ to $d$ and an ordered pair $(j,\, k)$, such that  $j \in [N] - \{N\}$ and $k \in [N] - \{1\}$. Similarly, Alice's now output ranges from $1$ to $N$ and includes $\perp$. Bob takes an input $y \in  [N]$ and produces output $b \in [d]$.} \label{extended scenario}
\end{figure}

Now, consider $N$ $d$-dimensional orthonormal bases $\{\ket{e_{x_1}^{(1)}}\}_{x_1 = 1}^d$, $\{\ket{e_{x_2}^{(2)}}\}_{x_2 = 1}^d$, ..., $\{\ket{e_{x_N}^{(N)}}\}_{x_N = 1}^d$, where $d \ge 2$, and define
\begin{align}
\lambda_x^{(j,k)} := \sqrt{1 - |\braket{e_{x_j}^{(j)}| e_{x_k}^{(k)}}|^2}. \label{new lambda}
\end{align}
For a fixed pair $(j,~k)$, consider the following functional
\begin{align}
	\mathcal{F}_d^{(j,k)} := \sum_{x_j,x_k = 1}^d \, \lambda^{(j,k)}_x \Bigg(
	\sum_{y \in \{j, k\}} \left[ \, p(y,\, x_y | x,\, y) - p(\bar{y},\, x_y | x,\, y) \right]
	- \frac{1}{2} \lambda^{(j,k)}_x \, \left[ \, p(a = j|x)+p(a = k|x) \right] \Bigg),
\end{align}
where $\bar{y}$ flips the value of $y$ from $j$ to $k$ and vice-versa. Then, $\mathcal{F}_d$ can be generalized by considering
\begin{align}
	\mathcal{F}_d^N = \sum_{j < k}^N \mathcal{F}_d^{(j,k)}. \label{new functional}
\end{align}
If written so, Eq.~\eqref{new functional} is clearly upper bounded by
\begin{align}
	\mathcal{F}_d^N \le \frac{1}{2}N(N - 1)(d - 1), \label{new quantum value}
\end{align}
as we can simply use Eq.~\eqref{quantum value} to sum the $\binom{N}{2}$ terms of Eq.~\eqref{new functional}.

This upper bound can be saturated if the $N$ orthonormal bases $\{\ket{e_{x_i}^{(i)}}\}_{x_i = 1}^d$ in dimension $d$ are such that $\lambda^{(j,k)}_x > 0$, for all $j,~k$, $x$ and $i = 1,\, 2,\, ...,\, N$. In this case, we can mirror the optimal realization proposed in the main text by taking the same state as in Eq.~\eqref{state}, $P_{x_i} = \ket{e_{x_i}^{(i)}} \!\! \bra{e_{x_i}^{(i)}}$ when Bob is given $y = i$ and
\begin{align}
A_x^{(j,k)} = \frac{1}{\lambda_x^{(j,k)}} (P_{x_j} - Q_{x_k})^\mathsf{T}.
\end{align}
Moreover, according to Theorem \ref{main result} the functional in Eq.~\eqref{new functional} is non-trivial, if, for the $N$ orthonormal bases, the condition $\lambda^{(j,k)}_x > 0$, $\forall~j,~k,~x$, is satisfied.

Lastly, an interesting problem can be posed if, instead of providing $N$ orthonormal bases, we specify $\lambda^{(j,k)}_x$ and try to recover the bases. In this case, it is not clear if this problem has a solution for any $N$ or arrangements of  $\lambda^{(j,k)}_x$. For instance, if we consider
\begin{align}
	|\braket{e_{x_j}^{(j)} | e_{x_k}^{(k)}}|^2 = \frac{1}{d},\quad \forall \, j,~k~\in [N]~\text{and} ~x_j, x_k \in [d],
\end{align}
we fix the values of $\lambda^{(j,k)}_x$ to be uniform. The resulting functionals are the same as the ones derived in Ref.~\cite{Colomer22} up to a factor of $\sqrt{1 - \sfrac{1}{d}}$. The numerical results of Ref.~\cite{Colomer22} confirm what is already known in the literature: there is no more than $d + 1$ MUBs in dimension $d$ and the Zauner's conjecture \cite{Zauner99} holds.

\section{Certification} \label{proof theorem 1}

\noindent In what follows, we present the proof of Theorem \ref{certification theorem}. First, we demonstrate the device-independent certification of the measurements of Bob, based solely on the condition that $\beta_Q$ is saturated. Recall that, as in Appendix \ref{proof lemma 1}, for the certification of the measurements of Bob, we can assume that the measurements of Alice are projective since we are not interested in their characterization. Next, we construct an isometry capable of extracting a maximally entangled state of dimension $d$. Lastly, we discuss some exceptional cases in which state certification cannot be achieved.

\subsection{Bob's measurements (proof of the first part of Theorem \ref{certification theorem})}

\noindent  If Ineq.~\eqref{quantum value} is saturated, all of the inequalities used in Appendix \ref{proof lemma 1} should be tight. This enables us to get some information about the state and measurements capable of achieving the quantum value.

The first step we must consider is going from Eq.~\eqref{C score} to Ineq.~\eqref{summation}, where the terms of $\mathcal{C}_d$ are bounded by the absolute value:
\begin{align}
\mathcal{C}_d = \sum_x \lambda_x\, \tr\left[(A_x\otimes(P_{x_1}-Q_{x_2}))\rho_{AB}\right] \le \sum_x \left|\lambda_x\, \tr\left[(A_x\otimes(P_{x_1}-Q_{x_2}))\rho_{AB}\right]\right|.
\end{align}
If the above inequality is tight, the argument of the absolute value must be non-negative. Since, by construction, $\lambda_x> 0$, we conclude that $\tr\left[A_x\otimes(P_{x_1}-Q_{x_2})\rho_{AB}\right]\ge 0$.

The saturation condition of the Cauchy--Schwarz inequality in Ineq.~\eqref{score sat}, in turn, implies that,
\begin{align}
\lambda_x\, A_x\otimes \mathds{1}\,\rho_{AB}^{\sfrac{1}{2}} = \mu_x\mathds{1}\otimes (P_{x_1}-Q_{x_2})\,\rho_{AB}^{\sfrac{1}{2}}, \quad \forall\: x= x_1 x_2, \label{saturation 0}
\end{align}
where $\mu_x\in\mathbb{C}$ is the proportionality constant. Right-multiplying it by $\rho_{AB}^{\sfrac{1}{2}}$ and left-multiplying by $\mathds{1}\otimes(P_{x_1}-Q_{x_2})$ leads to
\begin{align}
    \lambda_x\, A_x\otimes (P_{x_1}-Q_{x_2})\,\rho_{AB} = \mu_x\mathds{1}\otimes (P_{x_1}-Q_{x_2})^2\,\rho_{AB}.
\end{align}
If we trace both sides of this equation and use the fact that $\tr\left[A_x\otimes(P_{x_1}-Q_{x_2})\rho_{AB}\right]\ge 0$ (deduced above), we conclude that $\mu_x$ is real and non-negative.

The saturation condition of Ineq.~\eqref{score sat 2} implies that
\begin{align}
\lambda_x^2\, A_x^2\otimes\mathds{1}\,\rho_{AB}=\eta\,\mathds{1}\otimes (P_{x_1}-Q_{x_2})^2\,\rho_{AB}\, \quad \forall\: x=x_1,\, x_2, \label{saturation 1}
\end{align}
where $\eta$ is independent of $x$ since the Cauchy--Schwarz in Ineq.~\eqref{score sat 2} is applied to the entire summation. Now, let us multiply Eq.~\eqref{saturation 0} by its Hermitian conjugate and take the trace:
\begin{align}
    \lambda_x^2\, \tr \left[A_x^2\otimes \mathds{1}\,\rho_{AB}\right] = \mu_x^2\,\tr\left[\mathds{1}\otimes (P_{x_1}-Q_{x_2})^2\,\rho_{AB}\right] \label{saturation}
\end{align}
Comparing Eqs.~\eqref{saturation} and \eqref{saturation 1} we conclude that $\mu_{x}=\sqrt{\eta}$, for all $x$. So, let us drop the index of $\mu_x$ and use just $\mu$ from now on.

The last but not less important saturation condition is implicit in Ineq.~\eqref{projec sat} where we upper bound the following quantity by:
\begin{align}
\sum_x \tr \left[\left(\mathds{1} \otimes (P_{x_1}^2+Q_{x_2}^2-\{P_{x_1},\, Q_{x_2}\})\right)\rho_{AB}\right] \le \sum_x \tr \left[\left(\mathds{1}\otimes (P_{x_1}+Q_{x_2}-\{P_{x_1},\, Q_{x_2}\})\right) \rho_{AB}\right].
\end{align}
After a short algebraic manipulation, we get
\begin{align}
\tr\left[\left(\mathds{1} \otimes \sum_{x_1} (P_{x_1}-P_{x_1}^2)\right)\rho_{AB}\right]+\tr\left[\left(\mathds{1} \otimes \sum_{x_2} (Q_{x_2}-Q_{x_2}^2)\right)\rho_{AB}\right]\ge 0.
\end{align}
If we trace over Alice's subsystem, the term-by-term saturation of the above inequality implies
\begin{align}
\begin{aligned}
\tr\left[(P_{x_1}-P_{x_1}^2)\rho_B\right]&= 0,\quad\forall\, x_1 \\
\tr\left[(Q_{x_2}-Q_{x_2}^2)\rho_B\right]&= 0,\quad\forall\, x_2.
\end{aligned}
\end{align}
Since both $P_{x_1}-P_{x_1}^2$ (or $Q_{x_2}-Q_{x_2}^2$) and $\rho_B$ are positive semi-definite operators and their product has a null trace, they must be orthogonal:
\begin{align}
\begin{aligned}
(P_{x_1}-P_{x_1}^2)\rho_B &= 0, \\
(Q_{x_2}-Q_{x_2}^2)\rho_B &= 0.
\end{aligned} \label{projection equations}
\end{align}
Assuming that Bob's marginal state is full rank, it is possible to eliminate $\rho_B$ out of Eqs.~\eqref{projection equations} by right-multiplying it by $\rho_B^{-1}$. This way, we conclude that the saturation of Ineq.~\eqref{projec sat} implies that both measurements of Bob are projective. We can use this fact to sum Eq.~\eqref{saturation} over $x$ to obtain
\begin{align}
\sum_x \lambda_x^2\, \tr \left[A_x^2\otimes \mathds{1}\,\rho_{AB}\right] &= \mu^2 \sum_x \tr\left[\mathds{1}\otimes (P_{x_1}-Q_{x_2})^2\right]
\end{align}
Note that, if we assume the measurements of Alice to be projective, the right-hand side of the above equation can be identified as $\gamma$, in Eq.~\eqref{gamma}. On the other hand, the left-hand side can be summed to $2\mu^2(d-1)$. The only value of $\gamma$ which allows for the quantum value in Eq.~\eqref{quantum bound} is $2(d-1)$, so we conclude that $\mu=1$.

Having obtained the value of the proportionality constants, let us right-multiply Eq.~\eqref{saturation 0} by $\rho_{AB}^{\sfrac{1}{2}}$ and use $\mu=1$ to rewrite the following relation:
\begin{align}
\lambda_x A_x\otimes \mathds{1}\,\rho_{AB}=\,\mathds{1}\otimes (P_{x_1}-Q_{x_2})\,\rho_{AB}. \label{Alice and Bob measurements}
\end{align}
Again, assuming that the measurements of Alice are projective, we can use the Eq.~\eqref{alicesobs} to write $A_x^3=A_x$. Combining it with Eq.~\eqref{Alice and Bob measurements}, we get
\begin{align}
\frac{1}{\lambda_x}\,\mathds{1}\otimes (P_{x_1}-Q_{x_2})\,\rho_{AB}=\frac{1}{\lambda_x^3}\,\mathds{1}\otimes (P_{x_1}-Q_{x_2})^3\,\rho_{AB}.
\end{align}
Since we have eliminated Alice's observable out of the equation, we can derive a relation involving only Bob's measurement operators by tracing out Alice's subsystem and right-multiplying by the inverse of Bob's marginal state:
\begin{align}
\lambda_x^2 (P_{x_1}-Q_{x_2})=(P_{x_1}-Q_{x_2})^3. \label{similar}
\end{align}
After some algebraic manipulation,
\begin{align}
O_{x_1x_2}^2(P_{x_1}-Q_{x_2})=P_{x_1}Q_{x_2}P_{x_1}-Q_{x_2}P_{x_1}Q_{x_2}.
\end{align}
Now, if we sum over $x_1$,
\begin{align}
\sum_{x_1} O_{x_1x_2}^2P_{x_1}-\sum_{x_1}O_{x_1x_2}^2Q_{x_2} &=\sum_{x_1} (P_{x_1}Q_{x_2}P_{x_1}-Q_{x_2}P_{x_1}Q_{x_2}) \\
\sum_{x_1} O_{x_1x_2}^2P_{x_1}-Q_{x_2} &=\sum_{x_1} (P_{x_1}Q_{x_2}P_{x_1})-Q_{x_2}^2.
\end{align}
Because $Q_{x_2}=Q_{x_2}^2$,
\begin{align}
\sum_{x_1} O_{x_1x_2}^2P_{x_1} &=\sum_{x_1} (P_{x_1}Q_{x_2}P_{x_1})
\end{align}
Finally, since the $P_{x_1}$ projectors are orthogonal, this equality must hold term by term:
\begin{align}
O_{x_1x_2}^2P_{x_1}=P_{x_1}Q_{x_2}P_{x_1}. \label{equation1}
\end{align}
The summation over $x_2$ gives us the complementary equation:
\begin{align}
O_{x_1x_2}^2Q_{x_2}=Q_{x_2}P_{x_1}Q_{x_2}. \label{equation2}
\end{align}

Lastly, let us show that these two equations are complete i.e., if for some finite-dimensional measurements Eqs.~\eqref{equation1} and \eqref{equation2} hold, then it is possible to construct a realization that achieves the quantum value of $\mathcal{F}_d$. This means that the conditions derived are, at least for finite-dimensional measurements, necessary and sufficient, and no tighter characterisation can be obtained.

Consider an incompatible pair of measurements $\{P_{x_1}\}_{x_1=1}^d$ and $\{Q_{x_2}\}_{x_2=1}^d$ acting on a $D$-dimensional Hilbert space, with $D<\infty$, satisfying Eqs.~\eqref{equation1} and \eqref{equation2},
for some set of overlaps $0<O_{x_1x_2}<1$. Then, both $P_{x_1}$ and $Q_{x_2}$ are projectors and have equal traces, for all $x_1,~x_2\in\{1,~...,~d\}$. The projectivity can be demonstrated by simply summing Eqs.~\eqref{equation1} and \eqref{equation2} over $x_1$ and $x_2$ to get $P_{x_1}=P_{x_1}^2$ and $Q_{x_2}=Q_{x_2}^2$, respectively. Showing that $\tr(P_{x_1})=\tr(Q_{x_2})$ relies on demonstrating that the projectors are isomorphic. Defining $W_{x_1x_2}=\frac{1}{O_{x_1x_2}}P_{x_1}Q_{x_2}$ we can see that, for $O_{x_1x_2}>0$,
\begin{align}
W_{x_1x_2}W_{x_1x_2}^\dagger = P_{x_1} \quad \text{and} \quad W_{x_1x_2}^\dagger W_{x_1x_2} = Q_{x_2}.
\end{align}
For a finite-dimensional Hilbert space this implies that $\tr(P_{x_1})=\tr(Q_{x_2})=n$, for some fixed $n\in\mathbb{N}$. Because the dimension of the Hilbert space is finite, we rewrite $D$ as simply
\begin{align}
D=\tr\mathds{1}=\sum_{x_1}^d\tr P_{x_1}=dn.
\end{align}

Then, we construct a Hermitian operator $B_{x_1x_2}=P_{x_1}-Q_{x_2}$. By computing an explicit expression for $B_{x_1x_2}^{3}$, one can show that it satisfies a relation analogous to Eq.~\eqref{similar}:
\begin{align}
B_{x_1x_2}^3=\lambda_{x}^2B_{x_1x_2},\quad\forall~ x_1,~x_2\in\{1,~...,~d\}.
\end{align}
Clearly, the spectrum of $B_{x_1x_2}$ belongs to $\{0,\pm\lambda_{x}\}$. Because the trace of $B_{x_1x_2}$ is null, the multiplicity of its non-null eigenvalues must be the same. So, let us calculate
\begin{align}
\tr(B_{x_1x_2}^2)&= \tr\left(P_{x_1}+Q_{x_2}-\{P_{x_1},\, Q_{x_2}\}\right) \nonumber \\
&= 2n - \tr\left(\{P_{x_1},\, Q_{x_2}\}\right) \nonumber \\
&= 2n - 2\tr(P_{x_1}Q_{x_2}P_{x_1})=2n\lambda_{x}^2,
\end{align}
where we have used that $\tr(P_{x_1}Q_{x_2})=\tr(P_{x_1}Q_{x_2}P_{x_1})$. Thus, both $+\lambda_{x}$ and $-\lambda_{x}$ must have multiplicity $n$.

Now, let us define a realization of $\mathcal{F}_d$ in which the measurements of Bob are represented by $P_{x_1}$ and $Q_{x_2}$, the observables of Alice are given by $A_x=\sfrac{1}{\lambda_x}B_{x_1x_2}^\textsf{T}$ and the state is the $D$-dimensional maximally entangled state $\ket{\Phi_D^+}$. Like in Eq.~\eqref{alicesmeas}, the spectrum of $A_x$ also belongs to $\{0,\, \pm 1\}$. Then, evaluating the correlation score gives
\begin{align}
\mathcal{C}_d &= \sum_{x}\lambda_x\bra{\Phi_D^+}A_x\otimes B_{x_1x_2}\ket{\Phi_D^+} = \sum_{x}\bra{\Phi_D^+}\mathds{1}\otimes B_{x_1x_2}^2\ket{\Phi_D^+} \nonumber \\
&= \frac{1}{D}\sum_x\tr(B_{x_1x_2}^2) = \frac{1}{dn} \sum_x 2n\left(1-O_{x_1x_2}^2\right) = 2(d-1),
\end{align}
where we used the fact that the local state of Bob is $\sfrac{\mathds{1}}{D}$. For the final score, we obtain
\begin{align}
\mathcal{F}_d &= \mathcal{C}_d-\frac{1}{2}\sum_x \lambda_x^2 \bra{\Phi_D^+}A_x^2\otimes \mathds{1}\ket{\Phi_D^+} = \mathcal{C}_d-\frac{1}{2}\sum_x \bra{\Phi_D^+}\mathds{1}\otimes B_{x_1x_2}^2\ket{\Phi_D^+} = \mathcal{C}_d-\frac{1}{2D}\sum_x\tr(B_{x_1x_2}^2) = d-1,
\end{align}
as desired.

\subsection{The state (proof of the second part of Theorem 1)}

\noindent Here, we present the argument for the certification of the state. We show that, if the overlap matrix $O$ has at least one column or row whose entries are non-zero, then for any realization that saturates the quantum value of $\mathcal{F}_d$ we can construct local isometries $V_A:\:\mathcal{H}_A\rightarrow\mathbb{C}^d\otimes\mathcal{H}_A$ and $V_B:\:\mathcal{H}_B\rightarrow\mathbb{C}^d\otimes\mathcal{H}_B$ such that:
\begin{align}
(V_A\otimes V_B)\rho_{AB}(V_A^\dagger\otimes V_B^\dagger)=\Phi^+_d\otimes\rho_\text{aux}, \label{extraction}
\end{align}
where $\rho_{AB}\in\mathcal{L}\left(\mathcal{H}_A\otimes\mathcal{H}_B\right)$ is the shared state, $\Phi^+_d$ is the $d$-dimensional maximally entangled state, and $\rho_\text{aux}$ corresponds to the uncharacterized part of $\rho_{AB}$. In particular, note that the isometries depend on the measurement operators of Bob, as explained below.

Let us start by defining the isometry on Bob's side. To do this, we need to assume that the $j$-th column of $O$ is non-zero, i.e. $O_{i,j}\neq 0$, for all $i$. Then, we introduce the isometry $V_B: = SR$, where $R:\:\mathcal{H}_B\rightarrow\mathbb{C}^d\otimes\mathcal{H}_B$ and $S:\:\mathbb{C}^d\otimes\mathcal{H}_B\rightarrow\mathbb{C}^d\otimes\mathcal{H}_B$ are defined by
\begin{align}
R &:=\sum_i\ket{i}\otimes P_i\quad\text{and} \label{R}\\
S &:=\sum_k\ket{k}\!\!\bra{k}\otimes U_k. \label{S}
\end{align}
$U_k$ is a unitary operator defined as
\begin{align}
U_k:=\sum_i\frac{1}{O_{i,j}O_{i+k,j}}P_i Q_j P_{i+k}. \label{U}
\end{align}
Note that the subscripts in both $P_{i+k}$ and $O_{i+k,j}$ are taken as sum modulo $d$, where the result is in the set $\{1,~...,~d\}$. Then, the action of $U_k$ produces a cyclic shift on $\{P_i\}_{i=1}^d$:
\begin{align}
U_kP_iU_k^\dagger=P_{i-k}.
\end{align}
Finally, $V_B$ is rewritten as
\begin{align}
V_B=\sum_i\ket{i}\otimes U_iP_i=\frac{1}{O_{d,j}}\sum_i\frac{1}{O_{i,j}}\ket{i}\otimes P_dQ_jP_i.
\end{align}

For Alice's side, let us first define a pair of operators whose action in $\mathcal{H}_A$ is analogous to the action of $P_{x_1}$ and $Q_{x_2}$ on Bob's side:
\begin{align}
\begin{aligned}
\tilde{P}_{x_1}&:=\frac{1}{d}\bigg(\mathds{1}+\sum_{x_2}\lambda_xA_x\bigg),\\
\tilde{Q}_{x_2}&:=\frac{1}{d}\bigg(\mathds{1}-\sum_{x_1}\lambda_xA_x\bigg).
\end{aligned}\qquad\forall\; x=x_1x_2
\end{align}
If written as in the above equation, $\tilde{P}_{x_1}$ and $\tilde{Q}_{x_2}$ satisfy
\begin{align}
\begin{aligned}
(\tilde{P}_{x_1}\otimes\mathds{1})\rho_{AB} &= (\mathds{1}\otimes P_{x_1})\rho_{AB}, \\
(\tilde{Q}_{x_2}\otimes\mathds{1})\rho_{AB} &= (\mathds{1}\otimes Q_{x_2})\rho_{AB}, 
\end{aligned} \label{cross relations}
\end{align}
for all $\rho_{AB}$, $\{P_{x_1}\}_{x_1=1}^d$ and $\{Q_{x_2}\}_{x_2=1}^d$ that saturate the quantum value. The new operators $\tilde{P}_{x_1}$ and $\tilde{Q}_{x_2}$ satisfy similar algebraic relations that their counterparts on Bob's side. By taking the partial trace over Bob's marginal state, we can easily verify from Eqs.~\eqref{cross relations} that
\begin{align}
    \begin{aligned}
        \tilde{P}_{x_1}\rho_A &= \tilde{P}_{x_1}^2\rho_A, \\
        \tilde{Q}_{x_2}\rho_A &= \tilde{Q}_{x_2}^2\rho_A.
    \end{aligned}
\end{align}
Also, an analogous version of Eqs.~\eqref{equation1} and \eqref{equation2} must be satisfied by $\tilde{P}_{x_1}$ and $\tilde{Q}_{x_2}$:
\begin{align}
(\tilde{P}_{x_1}\otimes\mathds{1})\rho_{AB} &= (\mathds{1}\otimes P_{x_1})\rho_{AB}=\frac{1}{O_{x_1,x_2}^2}(\mathds{1}\otimes P_{x_1}Q_{x_2}P_{x_1})\rho_{AB}=\frac{1}{O_{x_1,x_2}^2}(\tilde{P}_{x_1}\tilde{Q}_{x_2}\tilde{P}_{x_1}\otimes\mathds{1})\rho_{AB},
\end{align}
which leads to
\begin{align}
\begin{aligned}
O_{x_1,x_2}^2\tilde{P}_{x_1}&= \tilde{P}_{x_1}\tilde{Q}_{x_2}\tilde{P}_{x_1}\quad\text{and}\\
O_{x_1,x_2}^2\tilde{Q}_{x_1}&= \tilde{Q}_{x_2}\tilde{P}_{x_1}\tilde{Q}_{x_2}.
\end{aligned}
\end{align}
Lastly, defining $\tilde{R}$, $\tilde{S}$ and $\tilde{U}_k$ analogously to Eqs.~\eqref{R}, \eqref{S} and \eqref{U}, respectively, by replacing $P_{x_1}\rightarrow \tilde{P}_{x_1}$ and $Q_{x_2}\rightarrow \tilde{Q}_{x_2}$, we get a similar expression for $V_A$:
\begin{align}
V_A:= \sum_i \ket{i}\otimes \tilde{U}_i\tilde{P}_i.
\end{align}

Now, let us evaluate the action of $V$ on the state. Defining $\rho_\text{out}: = (V_A\otimes V_B)\rho_{AB}(V_A^\dagger\otimes V_B^\dagger)$ gives
\begin{align}
\rho_\text{out} = \sum_{i,i^\prime}\sum_{j,j^\prime}\ket{i,i^\prime}\!\!\bra{j,j^\prime}\otimes (\tilde{U}_i\tilde{P}_i\otimes U_{i^\prime}P_{i^\prime})\rho_{AB}(\tilde{P}_j\tilde{U}_j^\dagger\otimes P_{j^\prime}U_{j^\prime}^\dagger)
\end{align}
and because of Eqs.~\eqref{cross relations},
\begin{align}
(\tilde{U}_i\tilde{P}_i\otimes U_{i^\prime}P_{i^\prime})\rho_{AB}(\tilde{P}_j\tilde{U}_j^\dagger\otimes P_{j^\prime}U_{j^\prime}^\dagger) &= (\tilde{U}_i\otimes U_{i^\prime}P_{i^\prime})(\tilde{P}_i\otimes\mathds{1})\rho_{AB}(\tilde{P}_j\otimes\mathds{1})(\tilde{U}_j^\dagger\otimes P_{j^\prime}U_{j^\prime}^\dagger) \nonumber\\
&= (\tilde{U}_i\otimes U_{i^\prime}P_{i^\prime})(\mathds{1}\otimes P_{i})\rho_{AB}(\mathds{1}\otimes P_{j})(\tilde{U}_j^\dagger\otimes P_{j^\prime}U_{j^\prime}^\dagger) \nonumber\\
&= \delta_{i,i^\prime}\delta_{j,j^\prime}(\tilde{U}_i\otimes U_{i}P_{i})\rho_{AB}(\tilde{U}_j^\dagger\otimes P_{j}U_{j}^\dagger).
\end{align}
Thus, $\rho_\text{out}$ turns to be
\begin{align}
\rho_\text{out} = \sum_{i,j}\ket{i,i}\!\!\bra{j,j} \otimes (\tilde{U}_i\otimes U_{i}P_{i})\rho_{AB}(\tilde{U}_j^\dagger\otimes P_{j}U_{j}^\dagger).
\end{align}
Using Eqs.~\eqref{cross relations} it is possible to show that
\begin{align}
(\tilde{U}_k\otimes \mathds{1})\rho_{AB}=(\mathds{1}\otimes U_k^\dagger)\rho_{AB},
\end{align}
and
\begin{align}
\rho_\text{out} = \sum_{i,j}\ket{i,i}\!\!\bra{j,j} \otimes (\mathds{1}\otimes U_{i}P_{i}U_i^\dagger)\rho_{AB}(\mathds{1}\otimes U_{j}P_{j}U_{j}^\dagger) = \Phi^+_d \otimes d(\mathds{1}\otimes P_d)\rho_{AB}(\mathds{1}\otimes P_d). \label{two conclusions}
\end{align}
Finally, as $\rho_\text{out}$ must constitute a normalized state, we must have
\begin{align}
\tr \left[\left(\mathds{1}\otimes P_d\right)\rho_{AB}\right]=\frac{1}{d}.
\end{align}
Eq.~\eqref{two conclusions} leads us to two conclusions. Firstly, it is possible to certify the shared state whenever $O$ is a matrix possessing at least one column filled with non-zero elements, as in Eq.~\eqref{U}. Secondly, as $P_d$ is arbitrarily labeled, the maximal violation also implies that the marginal distribution of $\{P_{x_1}\}_{x_1=1}^d$ is uniform. In fact, replacing $U_k$ in Eq.~\eqref{S} by $U_{k+l}$, where $l\in\{1,~...,~d\}$, leads to
\begin{align}
    \tr \left[\left(\mathds{1}\otimes P_{x_1}\right)\rho_{AB}\right]=\frac{1}{d}\quad\forall\: x_1\in[\, 1,\, ...,\, d\, ].
\end{align}
In addition, note that $P_i$ and $Q_j$ in Eqs.~\eqref{R} and \eqref{U} can be interchanged. If one does so, the state certification can be made whenever $O$ has at least one non-zero row, and, in this case, the marginal distribution of $\{Q_{x_2}\}_{x_2=1}^d$ is guaranteed to be uniform:
\begin{align}
    \tr \left[\left(\mathds{1}\otimes Q_{x_2}\right)\rho_{AB}\right]=\frac{1}{d}\quad\forall\: x_2\in[\, 1,\, ...,\, d\, ].
\end{align}

\subsection{Exceptional cases for state certification} \label{exceptional cases}

\noindent The presented method for certification works for a wide class of overlap matrices, but it still requires that $O$ must have at least one column or row whose entries are non-zero. This case does not include, for instance, overlap matrices as in the Eq.~\eqref{even matrix}. In this section, let us provide an example of a state different than $\ket{\Phi_d^+}$ that also achieves $\beta_Q$ for such cases.

To introduce it, suppose a realization of $\mathcal{F}_4$ characterized by $P_{x_1}=\ket{x_1}\!\!\bra{x_1}$ and $Q_{x_2}=\ket{f_{x_2}}\!\!\bra{f_{x_2}}$, where $x_1,\, x_2=1,\, ...,\, 4$, and
\begin{align}
    \begin{aligned}
    \ket{f_{1}} = \frac{1}{\sqrt{2}}\left(\ket{1}+\ket{2}\right), &\quad 
    \ket{f_{2}} = \frac{1}{\sqrt{2}}\left(\ket{1}-\ket{2}\right), \\
    \ket{f_{3}} = \frac{1}{\sqrt{2}}\left(\ket{3}+\ket{4}\right), &\;\:\text{and}\;\:
    \ket{f_{4}} = \frac{1}{\sqrt{2}}\left(\ket{3}-\ket{4}\right).
    \end{aligned}
\end{align}
If considered so, the overlap matrix is given by
\begin{align}
    O=\begin{bmatrix}
    \sfrac{1}{\sqrt{2}} & \sfrac{1}{\sqrt{2}} & 0 & 0 \\
    \sfrac{1}{\sqrt{2}} & \sfrac{1}{\sqrt{2}} & 0 & 0 \\
    0 & 0 & \sfrac{1}{\sqrt{2}} & \sfrac{1}{\sqrt{2}} \\
    0 & 0 & \sfrac{1}{\sqrt{2}} & \sfrac{1}{\sqrt{2}}
    \end{bmatrix}, \label{example}
\end{align}
that is, it has the form of Eq.~\eqref{even matrix}, for $d=4$. In the same way as before, the observables of Alice are defined as in Eq.~\eqref{alicesmeas}. On the other hand, the state is given by
\begin{align}
    \rho_{AB}=\sum_{k=1}^2\left(\mathds{1}\otimes\Pi_k\right)\Phi_4^+\left(\mathds{1}\otimes\Pi_k\right)
\end{align}
where $\Phi_4^+$ is the four-dimensional maximally entangled state and
\begin{align}
    \begin{aligned}
        \Pi_1 &=\ket{1}\!\!\bra{1}+\ket{2}\!\!\bra{2} \\
        \Pi_2 &=\ket{3}\!\!\bra{3}+\ket{4}\!\!\bra{4}.
    \end{aligned}
\end{align}
If written in this way, it is easy to see that
\begin{align}
    \Pi_1P_{x_1}\Pi_1 = \begin{cases}
        P_{x_1} &\text{if}~x_1=1,\, 2, \\
        0 &\text{if}~x_1=3,\, 4
    \end{cases}
    \quad\text{and}\quad
    \Pi_2P_{x_1}\Pi_2 = \begin{cases}
        0 &\text{if}~x_1=1,\, 2, \\
        P_{x_1} &\text{if}~x_1=3,\, 4,
    \end{cases}
\end{align}
which can be generalized to both measurements of Bob and expressed in short as
\begin{align}
    \sum_{k=1}^2 \Pi_k P_{x_1}\Pi_k = P_{x_1}\;\;\text{and}\;\;\sum_{k=1}^2 \Pi_k Q_{x_2}\Pi_k = Q_{x_2}. \label{projectors on P}
\end{align}

Now, note that this realization not only saturates the quantum value of $\mathcal{F}_4$, but preserves the same statistics that would be produced by substituting the four-dimensional maximally entangled state. To check it, suppose an arbitrary observable $\Gamma$ of Alice, then
\begin{align}
    \begin{aligned}
        \tr\left[\left(\Gamma\otimes P_{x_1}\right)\Phi_4^+\right] &= \sum_k \tr\left[(\mathds{1}\otimes \Pi_k)(\Gamma\otimes P_{x_1})(\mathds{1}\otimes \Pi_k)\Phi_4^+\right] \\ 
        &= \tr\left[(\Gamma\otimes P_{x_1})\sum_k (\mathds{1}\otimes \Pi_k)\Phi_4^+(\mathds{1}\otimes \Pi_k)\right] = \tr\left[\left(\Gamma\otimes P_{x_1}\right)\rho_{AB}\right],
    \end{aligned}
\end{align}
where we just used Eq.~\eqref{projectors on P} for $P_{x_1}$ and the cyclicity of the trace.

To verify that there is a clear difference between $\rho_{AB}$ and $\Phi_4^+$, consider an isometry $V:=V_A\otimes V_B$ such that the action of $V_A$ is defined as
\begin{align}
    \begin{aligned}
        V_A\ket{1}_A &= \ket{1}_A\ket{1}_{A^\prime}, \\
        V_A\ket{2}_A &= \ket{1}_A\ket{2}_{A^\prime}, \\
        V_A\ket{3}_A &= \ket{2}_A\ket{1}_{A^\prime}\quad\text{and} \\
        V_A\ket{4}_A &= \ket{2}_A\ket{2}_{A^\prime},
    \end{aligned}
\end{align}
where $\ket{\cdot}_A$ denotes the marginal state of Alice, and $\ket{\cdot}_{A^\prime}$ denotes the marginal state of an ancillary subsystem of Alice. The action of $V_B$ is defined in the same way for the marginal state of Bob and its ancillary subsystem. Then, by applying such an isometry to $\rho_{AB}$ one gets
\begin{align}
    V\rho_{AB}V^\dagger=\frac{1}{2}\left(\ket{1}\!\!\bra{1}_A\otimes\ket{1}\!\!\bra{1}_B+\ket{2}\!\!\bra{2}_A\otimes\ket{2}\!\!\bra{2}_B\right)\otimes\left(\Phi_2^+\right)_{A^\prime B^\prime},
\end{align}
where $\left(\Phi_2^+\right)_{A^\prime B^\prime}$ is the two-dimensional maximally entangled state between subsystems $A^\prime$ and $B^\prime$. In other words, one can show that, up to local isometries, $\rho_{AB}$ is equivalent to a perfectly correlated classical random bit combined with a single copy of $\left(\Phi_2^+\right)_{A^\prime B^\prime}$.

One can extend this simple example to the case where the matrix of overlaps has a block-diagonal structure with $K$ blocks of dimension $d_k$, where $k=1,~...,~K$. In this case, consider the state
\begin{align}
    \rho_{AB}=\sum_{k=1}^K \left(\mathds{1}\otimes\Pi_k \right)\Phi_d^+ \left(\mathds{1}\otimes \Pi_k\right),
\end{align}
where $\Pi_k$ projects into the $d_k$-dimensional subspace of the $k$-th block of $O$. Using a proper definition of $V$, one can show that
\begin{align}
    V\rho_{AB}V^\dagger = \sum_{k=1}^K p_k\ket{k}\!\!\bra{k}_A\otimes \ket{k}\!\!\bra{k}_B\otimes\left(\Phi^+_{d_k}\right)_{A^\prime B^\prime},
\end{align}
where $p_k=d_k/d$ and $\left(\Phi^+_{d_k}\right)_{A^\prime B^\prime}$ is the $d_k$-dimensional maximally entangled state between subsystems $A^\prime$ and $B^\prime$. That is, it is possible to show that, up to local isometries, $\rho_{AB}$ is equivalent to a convex combination of maximally entangled states of various dimensions, where the classical registers tell Alice and Bob which state they share.

\section{The local value of \texorpdfstring{$\mathcal{F}_d$}{Lg}} \label{proofs for sec. 4}

\noindent In this appendix, we proceed as follows: first, we obtain a universal lower bound for the local value. This constitutes a proof for Theorem \ref{lower bound theorem}. Next, we verify the existence of a $d$-dimensional overlap matrix that achieves this lower bound only for the even $d$ cases. For odd $d$, we show that there is no such a matrix.

\subsection{Preliminaries for Theorem \ref{lower bound theorem}} \label{preliminaries}

\noindent Before starting with the demonstration of Theorem \ref{lower bound theorem}, some short auxiliary results are required. This section develops the solutions to two minimization problems arising while in the demonstration.

\subsubsection{The constrained probability simplex} 

\noindent In the proof of Theorem \ref{lower bound theorem}, we are required to minimize a concave function over a polytope. In this subsection, let us characterize the extremal points of this polytope. We start by representing the convex set by variables $\{t_i\}_{i=1}^{n}$ such that $t_i\ge 0$, $\sum_i t_i = 1$ and $t_i \le \tau$ for a fixed $\tau\in(0,\, 1)$. The first two constraints simply give us the probability simplex. The last one can be interpreted as hyperplanes that cut off the vertices of this simplex. Therefore, we refer to the resulting set as the constrained probability simplex. 

Let the $n$-tuple $\mathbf{t}=(t_1,\, t_2,\, ...,\, t_{n})$ represent a point inside of the probability simplex. Its extremal points are given by the deterministic distributions i.e., the permutations of
\begin{align}
\mathbf{t}=(1,\, 0,\, 0,\, ...,\, 0). \label{simplex extremal}
\end{align}
Naturally, if we require that $t_i\le \tau$, these vertices no longer belong to the set. Then, let us show that the new extremal points of the constrained probability simplex are given by permutations of
\begin{align}
\mathbf{t}=\bigg(\underbrace{\tau,\, \tau,\, ...,\, \tau}_{\times \left\lfloor \frac{1}{\tau}\right\rfloor},\, 1-\left\lfloor \frac{1}{\tau}\right\rfloor \times \tau,\, 0,\, 0,\, ...,\, 0\bigg). \label{extremal points}
\end{align}

To see that there are no other extremal points, one can wonder how many coordinates are admitted to take a value other than zero or $\tau$. Let us suppose that $t_j$ and $t_{j+1}$ are two coordinates satisfying this condition, that is, $0<t_j,\: t_{j+1}<\tau$. An arbitrary point $\mathbf{t}_\text{arb}$ with components $t_j$ and $t_{j+1}$ must be, up to permutations,
\begin{align}
\mathbf{t}_\text{arb}=\left(\tau,\, \tau,\, ...,\, \tau,\, t_j,\, t_{j+1},\, 0,\, 0,\, ...,\, 0\right).
\end{align}
If $\mathbf{t}_\text{arb}$ is not extremal, it is possible to decompose it into a convex sum. Changing this distribution for a quantity $\epsilon$, we get
\begin{align}
\mathbf{t}_+=\left(\tau,\, \tau,\, ...,\, \tau,\, t_j+\epsilon,\, t_{j+1}-\epsilon,\, 0,\, 0,\, ...,\, 0\right)
\end{align}
or, in the inverse way,
\begin{align}
\mathbf{t}_-=\left(\tau,\, \tau,\, ...,\, \tau,\, t_j-\epsilon,\, t_{j+1}+\epsilon,\, 0,\, 0,\, ...,\, 0\right).
\end{align}
Note that, for a small enough $\epsilon$, $\mathbf{t}_-$ and $\mathbf{t}_+$ are still valid points. However, $\mathbf{t}_\text{arb}=\sfrac{1}{2}\left(\mathbf{t}_++\mathbf{t}_-\right)$, so $\mathbf{t}_\text{arb}$ is not extremal. Thus, any extremal point must have zero or one coordinate in the open interval $0<t_i<\tau$. The first case appears only when $\sfrac{1}{\tau}$ is an integer. Otherwise, we find ourselves in the second case. In either case, the extremal points must have the form of the vector $\mathbf{t}$ given in Eq.~\eqref{extremal points}, up to permutations.

Having shown that all the extremal points have the form given above, let us argue that all those points are in fact extremal. To do so, it suffices to show that no single point from $\mathbf{t}$ can be written as a convex combination of the remaining points. More specifically, suppose that $\mathbf{t}_j$, with $j=1,~...,~J$, are points of the form of $\mathbf{t}$, in Eq.~\eqref{extremal points}. Then, let us show that those points, in particular $\mathbf{t}_1$, cannot be decomposed into a convex sum of $\mathbf{t}_j$, for $j\neq 1$. Without loss of generality, assume that, for some strictly positive weights $w_j$, where $\sum_j w_j=1$, $\mathbf{t}_1$ can be written as a convex sum:
\begin{align}
    \mathbf{t}_1=\sum_{j\in\mathcal{J}} w_j\mathbf{t}_j,
\end{align}
where $\mathcal{J}$ is the set of indexes $j$ for which $w_j>0$. Then, suppose that the first component of $\mathbf{t}_1$ is $[\mathbf{t}_1]_1=\tau$, so
\begin{align}
    [\mathbf{t}_1]_1=\sum_{j\in\mathcal{J}} w_j[\mathbf{t}_j]_1=\tau.
\end{align}
By hypothesis, $[\mathbf{t}_j]_1\le \tau$, for all $j$, and
\begin{align}
    [\mathbf{t}_1]_1=\sum_{j\in\mathcal{J}} w_{j}[\mathbf{t}_{j}]_1\le \tau \sum_{j\in\mathcal{J}} w_{j}= \tau. \label{saturation argument}
\end{align}
In other words, Ineq.~\eqref{saturation argument} is saturated, so $[\mathbf{t}_{j}]= \tau$, for all $j\in\mathcal{J}$. We can repeat this same argument to all of the components of $\mathbf{t}_1$ equal to $\tau$, leading to the same conclusion. In addition, using that the components of $\mathbf{t}_j$ are non-negative, a similar argument can be used to show that if $[\mathbf{t}_1]_k=0$, then $[\mathbf{t}_{j}]_k=0$, for all $j\in\mathcal{J}$. The last component of $\mathbf{t}_1$ equals $1-\left\lfloor\frac{1}{\tau}\right\rfloor\times\tau$, which is fixed by normalization, and so are those of $\mathbf{t}_{j}$, for all $j\in\mathcal{J}$. Therefore, $\mathbf{t}_1$ cannot be decomposed into a convex sum of $t_j$, for $j\neq 1$. As $\mathbf{t}_1$ is an arbitrary point among all of the $\mathbf{t}_j$ points, for all $j$, then all of the points of the form of $\mathbf{t}$, in Eq.~\eqref{extremal points}, are extremal.

\subsubsection{Minimization of a specific function}

\noindent In the proof of Theorem 3, we are required to find the minimum value of the following function:
\begin{align}
s(\tau): = 2\left\lfloor\frac{1-\tau}{\tau}\right\rfloor\left(\sqrt{1-\tau}-1\right)+2\sqrt{\tau\left(1+\left\lfloor\frac{1-\tau}{\tau}\right\rfloor\right)}-\tau-2 \label{the function}
\end{align}
over $\tau\in\left(0,\, 1\right]$. To do so, let us first show that $s(\tau)$ is continuous. By looking at the function in Eq.~\eqref{the function}, it can be noted that the possible discontinuous points are those at which $(1-\tau)/\tau$ is an integer. So, let us evaluate the sided limits of $s(\tau)$ at points of the form of $\tau=\sfrac{1}{n}$, where $n$ is a positive integer:
\begin{align}
\lim_{\epsilon\rightarrow 0^+} \; s\left(\sfrac{1}{n}-\epsilon\right)
&= 2(n-1)\left(\sqrt{1-\sfrac{1}{n}}-1\right)-\sfrac{1}{n}
\end{align}
and
\begin{align}
\lim_{\epsilon\rightarrow 0^+} \; s\left(\sfrac{1}{n}+\epsilon\right)
&= 2(n-2)\left(\sqrt{1-\sfrac{1}{n}}-1\right)+2\left(\sqrt{1-\sfrac{1}{n}}-1\right)-\sfrac{1}{n} \nonumber \\
&= 2(n-1)\left(\sqrt{1-\sfrac{1}{n}}-1\right)-\sfrac{1}{n}.
\end{align}
Since the sided limits coincide, $s(\tau)$ is continuous w.r.t. $\tau$.

Now, note that, inside of the interval $\tau\in\left[\frac{1}{n+1},\,\frac{1}{n}\right]$, $s(\tau)$ can be written as
\begin{align}
s(\tau)=2(n-1)(\sqrt{1-\tau}-1)+2\sqrt{n\tau}-\tau-2.
\end{align}
If we treat $n$ as a fixed parameter, it is easy to see that $s(\tau)$ is a concave function of $\tau$ (a linear combination of concave terms with non-negative coefficients is concave). Moreover, for a single-variable concave function defined over a closed and bounded interval, the points that minimize this function are at the edges of the interval. In this case, we are looking for points of the form $\tau=\sfrac{1}{n}$. Therefore, to minimize $s(\tau)$, we can discard the points in the interior and focus only at the edges.

Then, let us define a function $g(n)$ whose domain is the set of positive integers:
\begin{align}
g(n): =s\left(\tau=\frac{1}{n}\right)=2(n-1)\left(\sqrt{1-\frac{1}{n}}-1\right)-\frac{1}{n}.
\end{align}
For a moment, assume that $n$ is a continuous variable. This way, $g(n)$ can be minimized just by evaluating its derivative:
\begin{align}
\frac{dg}{dn}=2\left(\sqrt{1-\frac{1}{n}}-1\right)+\frac{n-1}{n^2\sqrt{1-\sfrac{1}{n}}}+\frac{1}{n^2}.
\end{align}
Setting $\frac{dg}{dn}=0$ after some algebraic manipulation leads to:
\begin{align}
n^{*2}-n^*-1=0\quad\Rightarrow\;\; n^*=\frac{1+\sqrt{5}}{2}\approx 1.62,
\end{align}
which provides a single positive root for $n^*$. Moreover, the second derivative, $\frac{d^2g}{dn^2}$, is positive at $n = n^*$ and so $n^*$ is indeed a minimum of $g(n)$. However, $g(n)$ is only defined for positive integers, and we have to restrict the analysis to this set. Since $n^*$ is the only critical point of $g(n)$ and $\frac{dg}{dn}$ is positive for $n>n^*$ and negative for $n<n^*$, it suffices to check the closest integers of $n^*$, i.e., $n=1$ and 2.

Then, returning to $s(\tau)$, as $s(\tau=1)=-1$ and $s\left(\tau=\sfrac{1}{2}\right)=\sqrt{2}-\sfrac{5}{2}$, $s(\tau)$ is minimized at $\tau=\sfrac{1}{2}$.

\subsection{The analysis of the local value in Eq.~(\ref{local value})} \label{proof of the lower bound theorem}

\noindent In this section, let us show some properties that can be extracted from the expression of the local value in Eq.~\eqref{local value}. Firstly, suppose that $O_{ij}=1$, for some $i,~j\in\{1,~...,~d\}$. Note that this assumption leads to $O_{i,x_2}=\delta_{x_2,j}$ and $O_{x_1,j}=\delta_{x_1,i}$, as the rows and columns of $O$ are normalized. Next, Eq.~\eqref{local bound 2} evaluated for the strategy $s(i,\, j)$ of Bob leads to
\begin{align}
s(i,\, j) &= \sum_{x_1\neq i}^{d} \left[\sqrt{1-O_{x_1,j}^2}-\frac{1}{2}\left(1-O_{x_1,j}^2\right)\right]+\sum_{x_2\neq j}^{d} \left[\sqrt{1-O_{i,x_2}^2}-\frac{1}{2}\left(1-O_{i,x_2}^2\right)\right] = d-1.
\end{align}
Note that this is the largest value that can be achieved by any strategy, as the quantum value is also $d-1$. Therefore, in this case, this must be the optimal strategy of Bob, and $\beta_L(O)=d-1$. This is the reason why the assumption $O_{x_1x_2}<1$ is required for most of our analysis. If one desires to trivialize the Bell inequalities presented in this work, it suffices to construct $\mathcal{F}_d$ out of a matrix with a single overlap equal to 1.

Now, let us show that $\beta_L(O)$ can be lower bounded by a clever choice of the strategies of Bob. This constitutes a proof for Theorem 3. Without loss of generality, let us start by identifying the largest element of $O$ as $O_{11}$. We can always do so, as it is always possible to relabel the outputs. Then, let us lower bound $\beta_L$ by restricting the set of strategies from $u,\, v\in\{1,~...,~d\}$ to $u,\,v\in\{1\}$:
\begin{align}
\beta_L(O)=\underset{u,v}{\text{max}}\; \left[\, s(u, v)\,\right] \ge \underset{u,v\in \{1\}}{\text{max}}\; \left[\, s(u, v)\,\right] =  s\left(1,\, 1\right).
\end{align}
In other words, we choose to bound $\beta_L$ by the strategy related to the largest element of $O$, which we defined to be $O_{11}$. This particular choice circumvents the maximization and reduces the problem of lower bounding $\beta_L$ to the calculation of a single strategy. Recall that, from Eq.~\eqref{local bound 2}, we are summing over the interval $R_{\pm}=\{x\in [d]^2\, |\, (\delta_{x_1u}-\delta_{x_2v})=\pm 1\}$, which means that the sum is performed over row $u$ and column $v$, but it excludes the term $(u,~v)$, as $(\delta_{uu}-\delta_{vv})$ is, obviously, zero. Then, by making this choice, we are lower bounding $\beta_L$ by the strategy that excludes the largest term of $O$ in the sum. 

The strategy $s(1,\, 1)$, in turn, is given by
\begin{align}
s(1,\, 1) &= \sum_{x_1\neq 1}^d \left[\sqrt{1-O_{x_1,1}^2}-\frac{1}{2}\left(1-O_{x_1,1}^2\right)\right]+\sum_{x_2\neq 1}^d \left[\sqrt{1-O_{1,x_2}^2}-\frac{1}{2}\left(1-O_{1,x_2}^2\right)\right].
\end{align}
Defining $t_{x_1x_2}: = O_{x_1x_2}^2$ and using the normalization of rows and columns of $O$, we get
\begin{align}
s(1,\, 1) = \sum_{x_1\neq 1}^d \sqrt{1-t_{x_1,1}}+\sum_{x_2\neq 1}^d \sqrt{1-t_{1,x_2}}-d-t_{11}+2. \label{minimization}
\end{align}
To continue, let us define an auxiliary function, $h$, such that
\begin{align}
h(t_{21},\, t_{31},\, ...,\, t_{d,1}) &:= \sum_{x_1=2}^d \sqrt{1-t_{x_1,1}}, \label{h function}
\end{align}
i.e., it corresponds to the first term of Eq.~\eqref{minimization}. Note that $h$ is a Schur-concave function w.r.t. the $d-1$ variables $(t_{21},\, t_{31},\, ...,\, t_{d,1})$. This is of particular importance because any Schur-concave function, when minimised over a compact set, achieves its minimum on some extremal point of this set. Thus, we can use this fact to minimize both functions.

Then, let us express the $d-1$ variables of $h$ as a vector, $\mathbf{t}:=(t_{21},\, t_{31},\, ...,\, t_{d,1})$. For the elements of $\mathbf{t}$, we have
\begin{align}
t_{x_1,1}\ge 0,\quad\sum_{x_1\neq 1}t_{x_1,1}=1-t_{11}\quad\text{and}\quad t_{x_1,1}\le t_{11}. \label{constrained simplex}
\end{align}
The last condition comes from the fact that $O_{11}$ (and, by consequence, $t_{11}$) is the largest element. The solution to the minimization of $h(\mathbf{t})$ is presented in Section \ref{preliminaries} where the extremal points $\mathbf{t}_\text{ext}$ of the set in Eqs.~\eqref{constrained simplex} are given by permutations of
\begin{align}
\mathbf{t}_\text{ext}=\bigg(\underbrace{t_{11},\, ...,\, t_{11}}_{\times \left\lfloor\frac{1-t_{11}}{t_{11}}\right\rfloor},1-t_{11}-\left\lfloor\frac{1-t_{11}}{t_{11}}\right\rfloor\times t_{11},0,\, ...,\, 0\bigg). \label{solution vector}
\end{align}
An identical solution is obtained for the minimization of the second term of Eq.~\eqref{minimization}, so $s(1,\, 1)$ can be lower bounded by
\begin{align}
s(1,\, 1) \ge 2\left\lfloor\frac{1-t_{11}}{t_{11}}\right\rfloor\left(\sqrt{1-t_{11}}-1\right)+2\sqrt{t_{11}\left(1+\left\lfloor\frac{1-t_{11}}{t_{11}}\right\rfloor\right)}-t_{11}+d-2. \label{solution}
\end{align}
Now, we have a single-variable function, which can be easily minimized. This is precisely the function analyzed in Section \ref{preliminaries} and the minimum is proven to occur at $t_{11}=\sfrac{1}{2}$. Thus,
\begin{align}
s(1,\, 1)\ge d+\sqrt{2}-\frac{5}{2},
\end{align}
which implies that $\beta_L(O) \ge d+\sqrt{2}-\sfrac{5}{2}$, for any $d$-dimensional $O$, with $d\ge 2$.

\subsection{Proof of Lemma \ref{optimal matrix}} \label{proof lemma 2}

\noindent In this section, we investigate the overlap matrices that saturate the lower bound on $\beta_L$. Let us start with the following lemma.

\begin{lemma} \label{smart corollary}
If a $d\times d$ matrix $O^*$ is such that $\beta_L(O^*)=d+\sqrt{2}-\sfrac{5}{2}$, then a $2\times 2$ MUB block can be separated from the remainder.
\end{lemma}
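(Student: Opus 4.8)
The plan is to exploit the fact that equality in Theorem \ref{lower bound theorem} forces all the inequalities used in its proof to be tight, and then track back what this implies about the structure of $O^*$. First I would recall that the lower bound on $\beta_L(O^*)$ was obtained in three stages: (i) restricting the maximization to the single strategy $s(1,1)$ associated with the largest entry $O^*_{11}$; (ii) minimizing each of the two Schur-concave sums $h$ over the constrained simplex of Eq.~\eqref{constrained simplex}, which forces the relevant off-diagonal entries of the first row and first column to sit at an extremal point of the form \eqref{solution vector}; and (iii) minimizing the resulting single-variable function \eqref{solution}, whose unique minimizer over $(0,1]$ is $t_{11}=\sfrac12$. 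If $\beta_L(O^*)=d+\sqrt 2-\sfrac52$, all three must be tight simultaneously.

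Next I would read off the consequences. Tightness in (iii) gives $t_{11}=O^{*2}_{11}=\sfrac12$, i.e.\ the largest entry of $O^*$ equals $\sfrac1{\sqrt2}$; since $t_{11}=\sfrac12$ we have $\lfloor (1-t_{11})/t_{11}\rfloor = 1$, so the extremal vector \eqref{solution vector} for the first column is (up to permutation) $(\sfrac12,\,0,\dots,0)$ — that is, exactly one other entry in column $1$ equals $O^*_{11}=\sfrac1{\sqrt2}$ and the rest vanish; the same holds for row $1$. After relabeling I may assume $O^*_{21}=O^*_{12}=\sfrac1{\sqrt2}$ and $O^*_{x_1,1}=0$ for $x_1\ge 3$, $O^*_{1,x_2}=0$ for $x_2\ge 3$. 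Now I would use the remaining information: the rows and columns of $O^*$ are normalized (as squared-sum vectors), so row $1$ already sums to $O^{*2}_{11}+O^{*2}_{12}=\sfrac12+\sfrac12=1$, forcing $O^*_{13}=\cdots=0$ consistently, and likewise column $1$ is saturated. The crucial extra step is to pin down $O^*_{22}$: normalization of row $2$ gives $\sum_{x_2}O^{*2}_{2,x_2}=1$, and normalization of column $2$ gives $\sum_{x_1}O^{*2}_{x_1,2}=1$; since $O^*_{2,1}=\sfrac1{\sqrt2}$ and $O^*_{1,2}=\sfrac1{\sqrt2}$, I get $\sum_{x_2\ge2}O^{*2}_{2,x_2}=\sfrac12$ and $\sum_{x_1\ge2}O^{*2}_{x_1,2}=\sfrac12$. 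To conclude that in fact $O^*_{22}=\sfrac1{\sqrt2}$ and all other entries of row $2$ and column $2$ vanish, I would invoke that $O^*_{11}=\sfrac1{\sqrt2}$ is the \emph{largest} entry, hence $O^*_{2,x_2}\le \sfrac1{\sqrt2}$; but more to the point I would re-run the extremality argument of stage (ii) now anchored at the pair $(2, \text{column }1)$ or simply observe that the tightness of $s(1,1)$ already included the term $\sqrt{1-t_{21}}$ evaluated at $t_{21}=\sfrac12$, and the extremal structure \eqref{solution vector} applied to the first column says every nonzero off-diagonal entry of column $1$ equals $t_{11}$ — which we used — while a symmetric run starting from the (equally optimal, since it also achieves the bound) strategy $s(2,2)$ forces the analogous structure on row $2$ and column $2$. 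Combining, the $2\times2$ top-left block is $\sfrac1{\sqrt2}J_2$ and it is decoupled from everything else, i.e.\ $O^* = \sfrac1{\sqrt2}J_2 \oplus O'$ up to permutation, which is precisely a separated $2\times2$ MUB block.

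The main obstacle I anticipate is \emph{rigorously} upgrading "each individual optimal strategy forces an extremal simplex vector" into "the $2\times 2$ block fully decouples" — a priori tightness of $s(1,1)$ only constrains the first row and first column, and one needs a genuinely separate argument (using that $s(2,2)$, or $s(1,1)$ read in the other index, must also equal $\beta_L$, which it does since $\beta_L$ is already achieved) to control the second row and second column and thereby force $O^*_{22}=\sfrac1{\sqrt2}$ with the rest of that row/column zero. A subtlety to handle carefully is the floor function: the argument that exactly one extra nonzero entry appears in the optimal column works because $1/t_{11}=2$ is an integer, so \eqref{solution vector} has no "leftover" fractional coordinate; I would make this explicit. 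Once the block is isolated, normalization of the complementary rows/columns shows $O'$ is itself a valid $(d-2)\times(d-2)$ overlap pattern, setting up the induction used in the proof of Lemma \ref{optimal matrix}.
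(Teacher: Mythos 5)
Your reduction to simultaneous tightness of the three stages matches the paper, and stages (i)--(iii) correctly yield $t_{11}=\sfrac{1}{2}$ and, via the strict concavity of $h$ in Eq.~\eqref{h function}, that row~1 and column~1 of $O^*$ are, up to permutation, $(\sfrac{1}{\sqrt{2}},\,\sfrac{1}{\sqrt{2}},\,0,\,\dots,\,0)$. The gap is exactly at the step you flag as the main obstacle, and neither of your proposed routes closes it. Normalization plus maximality of $O^*_{11}$ only gives $\sum_{x_2\ge 2}t_{2,x_2}=\sfrac{1}{2}$ with each term at most $\sfrac{1}{2}$, which is compatible with, say, $t_{22}=0$ and $t_{23}=t_{24}=\sfrac{1}{4}$. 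This is not a hypothetical worry: the bistochastic matrix in Eq.~\eqref{bistochastic optimal} satisfies every constraint you use (doubly stochastic squared entries, largest entry $\sfrac{1}{2}$, first row and column of the required form after a column permutation, and it saturates $\beta_L=d+\sqrt{2}-\sfrac{5}{2}$), yet it has a zero in position $(2,2)$ and does not decompose into $2\times 2$ blocks. So no argument based only on row/column sums can force $O^*_{22}=\sfrac{1}{\sqrt{2}}$. The appeal to $s(2,2)$ also fails: saturating $\beta_L$ only gives $s(2,2)\le\beta_L$, while the chain of inequalities that forces a strategy \emph{up} to $d+\sqrt{2}-\sfrac{5}{2}$ was anchored at the globally largest entry; since you do not know that $O^*_{22}$ is the largest entry of its row and column (you do not even know it is nonzero), you cannot conclude $s(2,2)=\beta_L$, and indeed the generic bound only gives $s(2,2)\ge (d-1)(\sqrt{2}-\sfrac{1}{2})$, which is strictly below $\beta_L$ for $d\ge 3$.

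The missing ingredient is unitarity, which is what the paper invokes at precisely this point: $O^*$ is the entrywise modulus of a unitary $U$ (Eq.~\eqref{overlap}), not merely a matrix whose squared rows and columns are normalized. Column~1 of $U$ is supported on its first two entries, each of modulus $\sfrac{1}{\sqrt{2}}$, so orthogonality with column~2 reads $\overline{U_{11}}U_{12}+\overline{U_{21}}U_{22}=0$ and forces $|U_{22}|=\sfrac{1}{\sqrt{2}}$; normalization of column~2 and of row~2 then annihilates all their remaining entries, decoupling the $\tfrac{1}{\sqrt{2}}J_2$ block. With that replacement your argument goes through and the induction you set up for Lemma~\ref{optimal matrix} is exactly the paper's.
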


\begin{proof} Assume that $O^*$ is a $d\times d$ matrix such that $\beta_L(O^*)=d+\sqrt{2}-\sfrac{5}{2}$. From the argument developed in the last section, we start by assigning $\sfrac{1}{2}$ to $t_{11}$, or, in this case, assuming that $O^*_{11}=\sfrac{1}{\sqrt{2}}$. Because of the strict concavity of $h(\mathbf{t})$ in Eq.~\eqref{h function}, Eq.~\eqref{solution} is tight if and only if the vector that minimizes $h(\mathbf{t})$ is given by permutations of
\begin{align}
\mathbf{t}_\text{ext}=\left(\frac{1}{2},\, 0,\, ...,\, 0\right),
\end{align}
that is, to saturate the lower bound, we take $O^*_{12}=O^*_{21}=\sfrac{1}{\sqrt{2}}$. Besides that, we must also guarantee that the overlap matrix corresponds to a valid unitary. Recall that, from the Eq.~\eqref{overlap}, the overlaps are taken as the absolute value of the inner product between bases $\{e_{x_1}\}_{x_1=1}^d$ and $\{f_{x_2}\}_{x_2=1}^d$. In other words, the elements of $O$ are the absolute value of the elements of some unitary. It is easy to see that this implies that $O^*_{22}=\sfrac{1}{\sqrt{2}}$, so that the partial form of $O^*$ is given by
\begin{align}
O^*=\begin{bmatrix}
\sfrac{1}{\sqrt{2}} & \sfrac{1}{\sqrt{2}} & 0 & ... & 0\\
\sfrac{1}{\sqrt{2}} & \sfrac{1}{\sqrt{2}} & 0 & ... & 0\\
0 & 0 & & & \\
\vdots & \vdots & & O_\text{rest} & \\
0 & 0 & & & \\
\end{bmatrix}, \label{optimal matrix 0}
\end{align}
where $O_\text{rest}$ represents the uncharacterized elements of $O^*$. \end{proof}

Now, to continue with the proof of Lemma 2, we have to obtain some characterization of $\beta_L(O_\text{rest})$. In fact, we will show that
\begin{align}
\beta_L(O_\text{rest}) = (d-2)+\sqrt{2}+\frac{5}{2},
\end{align}
which coincides with the lower bound for dimension $d - 2$. 

Because $O_\text{rest}$ is a square matrix of dimension $d-2$, the Theorem \ref{lower bound theorem} already implies that $\beta_L(O_\text{rest})$ is lower bounded by $(d-2)+\sqrt{2}+\sfrac{5}{2}$. On the other hand, the upper bound is a consequence of Lemma \ref{smart corollary}. Because $\beta_L(O^*)=d+\sqrt{2}-\sfrac{5}{2}$, all strategies of $O^*$ must be upper bounded by this same amount. In particular, note that for each strategy labeled by outputs inside of the $O_\text{rest}$ block, in Eq.~\eqref{optimal matrix 0}, there are four zero terms, whose overall contribution is $+2$. Then, when considering only $O_\text{rest}$, we have $\beta_L(O_\text{rest})\le (d-2)+\sqrt{2}+\sfrac{5}{2}$.

Thus, for dimension $d-2$, the block $O_\text{rest}$ also fulfils the assumptions of Lemma \ref{smart corollary}, so we can also extract a $2\times 2$ MUB block from it. Naturally, the indefinite iteration of the above argument will lead us to two cases: either $d$ is even and
\begin{align}
O^* =\begin{bmatrix}
\sfrac{1}{\sqrt{2}} & \sfrac{1}{\sqrt{2}} & ... & 0 & 0\\
\sfrac{1}{\sqrt{2}} & \sfrac{1}{\sqrt{2}} & ... & 0 & 0\\
\vdots & \vdots & \ddots & \vdots & \vdots \\
0 & 0 & ... & \sfrac{1}{\sqrt{2}} & \sfrac{1}{\sqrt{2}} \\
0 & 0 & ... & \sfrac{1}{\sqrt{2}} & \sfrac{1}{\sqrt{2}} \\
\end{bmatrix}. \label{even matrix 1}
\end{align}
or $d$ is odd and
\begin{align}
O^*=\begin{bmatrix}
\sfrac{1}{\sqrt{2}} & \sfrac{1}{\sqrt{2}} & 0 & ... & 0\\
\sfrac{1}{\sqrt{2}} & \sfrac{1}{\sqrt{2}} & 0 & ... & 0\\
0 & 0 & & & \\
\vdots & \vdots & & \ddots \\
0 & 0 & & & 1\\
\end{bmatrix}, \label{odd matrix}
\end{align}
where the last block corresponds to the element $O^*_{d,d}=1$. However, while Eq.~\eqref{even matrix 1} is a correct form of $O^*$ in the even case, for odd $d$, this is not true. Firstly, because the matrix in Eq.~\eqref{odd matrix} has one overlap equal to one, which we discarded from our analysis. Secondly, we know, from Appendix \ref{proof of the lower bound theorem}, that this leads to $\beta_L(O^*)=d-1$, which contradicts the initial statement of Lemma \ref{smart corollary}, that $\beta_L(O^*)=d+\sqrt{2}-\sfrac{5}{2}$. Thus, for the odd $d$ case, there is no $O^*$ that saturates the lower bound in Theorem 3.

\subsection{Some considerations for the odd \texorpdfstring{$d$}{Lg} case} \label{explanation}

\noindent Finally, let us quickly explain why the lower bound derived in Theorem 3 is not tight for odd $d$. Consider the entry-wise squared version of $O$, which we refer to as $T$. If written so, $T$ assumes the form of a \emph{unistochastic} matrix i.e., if $U_{x_1x_2}$ are the elements of a unitary $U$, for $x_1,~x_2\in\{1,\, ...,\, d\}$, then $T_{x_1x_2}=|U_{x_1x_2}|^2$. A unistochastic matrix is also a particular case of a \emph{bistochastic} matrix -- a non-negative matrix whose columns and rows add up to one. For $2\times 2$ arrays, it happens that the bistochastic and the unistochastic sets of matrices coincide, but for larger dimensions the unistochastic set is a proper subset of the bistochastic set.

It is clear that our original intention was to minimize $\beta_{L}(O)$ over the set of overlap matrices. However, the proof of Theorem 3 relies only on the fact that $T$ is bistochastic and, in fact, the derived lower bound corresponds exactly to the lowest value achievable by a bistochastic matrix. For even $d$, among the optimal bistochastic matrices, there exist some which are also unistochastic and, hence, the resulting bound is tight.

For odd $d$, however, if we try to simultaneously saturate the bound and enforce unistochasticity, we reach a contradiction, as shown in the previous section. On the other hand, the lower bound can be saturated by a bistochastic matrix, e.g.:
\begin{align}
T^*=\begin{bmatrix}
\sfrac{1}{2} & 0 & 0 & ... & 0 & \sfrac{1}{2} \\
\sfrac{1}{2} & \sfrac{1}{2} & 0 & ... & 0 & 0 \\
0 & \sfrac{1}{2} & \sfrac{1}{2} & ... & 0 & 0 \\
\vdots & \vdots & \vdots & \ddots & \vdots & \vdots \\
0 & 0 & 0 & ... & \sfrac{1}{2} & \sfrac{1}{2} \\
\end{bmatrix}, \label{bistochastic optimal}
\end{align}
which is valid for all $d\ge 2$. However, none of the optimal bistochastic matrices for odd $d$ happen to be unistochastic.

Now, let us focus on the case $d=3$. An aspect that makes the optimization over the set of unistochastic matrices for $d = 3$ difficult is the fact that this set is not convex, but only star-convex. Also, the unistochastic and bistochastic sets are both centered at $\frac{1}{3}J_3$, where $J_3$ is the $3\times 3$ matrix of ones \cite{Bengtsson05}. A proper notion of center can be acquired if one considers a uniformly weighted convex combination of the extremal points of the permutation matrices. The permutation matrices are the only extremal points of the bistochastic set, and they are also extremal for the unistochastic set.

We have numerically implemented a function that calculates the local value for a given overlap matrix. By performing a local search over a large number of random starting points we have reached the conjecture that the smallest value of $\beta_L$ for $d = 3$ is achieved for
\begin{align}
O_\text{conj}=\begin{bmatrix}
\sfrac{1}{3} & \sfrac{2}{3} & \sfrac{2}{3} \\
\sfrac{2}{3} & \sfrac{1}{3} & \sfrac{2}{3} \\
\sfrac{2}{3} & \sfrac{2}{3} & \sfrac{1}{3} \\
\end{bmatrix}. \label{conjecture}
\end{align}
To see why this conjecture is reasonable, consider the entry-wise squared version of $O_\text{conj}$, which we refer to as $T_\text{conj}$. An analytic condition derived in Ref.~\cite{Fedullo92} allows us to check that $T_\text{conj}$ is not only unistochastic, but also lies at the boundary of the unistochastic set. Furthermore, consider a permutation of the three-dimensional matrix in Eq.~\eqref{bistochastic optimal},
\begin{align}
T^*_{3}=\begin{bmatrix}
0 & \sfrac{1}{2} & \sfrac{1}{2} \\
\sfrac{1}{2} & 0 & \sfrac{1}{2} \\
\sfrac{1}{2} & \sfrac{1}{2} & 0 \\
\end{bmatrix}.
\end{align}
Then,
\begin{align}
T_\text{conj}=\frac{1}{3}\left(\frac{1}{3}J_3\right)+\frac{2}{3}T^*_{3}.
\end{align}
That is, if one considers the line segment connecting the center of the unistochastic set and the optimal bistochastic matrix, $T_\text{conj}$ can be found at the intersection of this segment with the boundary of the unistochastic set.

Lastly, let us show that $O_\text{conj}$ provides a smaller local value than MUBs. By calculating $\beta_L(O_\text{conj})$, we get
\begin{align}
\beta_L(O_\text{conj}) = \frac{1}{9}\left[6\left(\sqrt{8}+\sqrt{5}\right)-13\right]\approx 1.9319.
\end{align}
For $d$-dimensional MUBs, we obtain
\begin{align}
\beta_L\left(\frac{1}{\sqrt{d}}J_d\right)= 2(d-1)\sqrt{\frac{d-1}{d}}-\frac{(d-1)^2}{d}.
\end{align}
If evaluated for $d = 3$, then $\beta_L\left(\sfrac{J_3}{\sqrt{3}}\right)=1.9327$, which is slightly bigger than $\beta_L(O_\text{conj})$. In fact, $O_\text{conj}$ allows us to construct counterexamples for all odd dimensions. For odd $d \geq 3$, consider the following matrix:
\begin{align}
O_{d}^\oplus=\left(\bigoplus_{i=1}^{\left\lfloor\sfrac{d}{2}\right\rfloor-1}\frac{1}{\sqrt{2}}J_2\right)\oplus O_\text{conj}. 
\end{align}
The local value obtained for $O_{d}^\oplus$ is given by
\begin{align}
\beta_L(O_{d}^\oplus) = d-3+\frac{1}{9}\left[6\left(\sqrt{8}+\sqrt{5}\right)-13\right]\approx d-1.0681.
\end{align}
If one takes the derivative of $\beta_L\left(\sfrac{J_d}{\sqrt{d}}\right) - \beta_L(O_{d}^\oplus)$, by a simple analytic argument it is possible to conclude that this derivative is positive for all odd $d \geq 3$. Because $\beta_L\left(\sfrac{J_d}{\sqrt{d}}\right) - \beta_L(O_{d}^\oplus)$ is positive for $d=3$, then it must also be positive for all odd $d\ge 3$. Therefore, this shows that for all odd $d\ge 3$, the realization of $\mathcal{F}_d$ that is most robust to noise does not correspond to MUBs in dimension $d$.

\end{document}